\newcommand{\eat}[1]{}
\renewcommand{\sp}{\ensuremath{p_{sp}}} 
\newcommand{\kdsp}{\ensuremath{k\text{DPwML}}\xspace}
\newcommand{\kspwlo}{\ensuremath{k\text{SPwLO}}\xspace}
\newcommand{\olFunc}{Sim\xspace}
\newcommand{\Q}{\ensuremath{\mathcal{Q}}\xspace}
\newcommand{\PALL}{\ensuremath{P_{all}}\xspace}
\newcommand{\PSP}{\ensuremath{P_{sp}}\xspace}
\newcommand{\PSSV}{\ensuremath{P_{ssv}}\xspace} 
\newcommand{\PDSP}{\ensuremath{P_{\kdsp}}\xspace}
\DeclareMathOperator*{\poly}{poly}
\newcommand{\len}{\ensuremath{\ell}}
\newcommand{\LEN}{\ensuremath{\mathcal{L}}}
\newtheorem{problem}{Problem}
\newcounter{myfunction}
\newenvironment{myfunction}[1][t]{
	
    \begin{algorithm}[#1]%
}{
    \end{algorithm}
}
\newcommand{\kspdml}{$K$SP-DML\xspace} 
\newcommand{\kspdplus}{FindKSPD\xspace}
\newcommand{\svpdml}{SSVP-DML\xspace} 
\newcommand{\svpdplus}{SSVP-D+\xspace}
\renewcommand\footnotetextcopyrightpermission[1]{} 
\begin{document}
\title{Finding $k$-Dissimilar Paths with Minimum Collective Length}
\titlenote{Extended version of the ACM SIGSPATIAL'18 paper under the same title.}

\author{Theodoros Chondrogiannis}
\affiliation{%
  \institution{
  University of Konstanz}
}
\email{theodoros.chondrogiannis@uni.kn}

\author{Panagiotis Bouros}
\affiliation{%
  \institution{
  Johannes Gutenberg University Mainz}
}
\email{bouros@uni-mainz.de}

\author{Johann Gamper}
\affiliation{%
  \institution{
  Free University of Bozen-Bolzano}
}
\email{gamper@inf.unibz.it}

\author{Ulf Leser}
\affiliation{%
  \institution{
  Humboldt-Universit{\"a}t zu Berlin}
}
\email{leser@informatik.hu-berlin.de}

\author{David B. Blumenthal}
\affiliation{%
  \institution{
  Free University of Bozen-Bolzano}
}
\email{david.blumenthal@inf.unibz.it}

\renewcommand{\shortauthors}{T. Chondrogiannis et al.}

\begin{abstract}
Shortest path computation is a fundamental problem in road networks. However, in
many real-world scenarios, determining solely the shortest path is not enough.
In this paper, we study the problem of finding $k$-Dissimilar Paths with Minimum
Collective Length (\kdsp), which aims at computing a set of paths from a source $s$
to a target $t$ such that all paths are pairwise dissimilar by at least $\theta$ and
the sum of the path lengths is minimal. We introduce an exact algorithm for the
\kdsp problem, which iterates over all possible $s{-}t$ paths while employing two
pruning techniques to reduce the prohibitively expensive computational cost. To
achieve scalability, we also define the much smaller set of the simple single-via
paths, and we adapt two algorithms for \kdsp queries to iterate over this set.
Our experimental analysis on real road networks shows that iterating over all paths
is impractical, while iterating over the set of simple single-via paths can lead to
scalable solutions with only a small trade-off in the quality of the results.
\end{abstract}

\keywords{Alternative Routing, Route Planning, Path Similarity}

\maketitle

\section{Introduction}

Computing the shortest path between two locations in a road network is a fundamental problem
that has attracted the attention of both the research community and the industry. In many real-world
scenarios though, determining solely the shortest path is not enough. Most commercial route planning
applications recommend alternative paths that might be longer than the shortest path, but have other
desirable properties, e.g., less traffic congestion. However, the recommended paths also need to be
dissimilar to each other to be valued as true alternatives by users. Towards this end, various
approaches have been proposed that aim at computing short yet dissimilar to each other alternative
paths~\cite{akgun2000,chondrogiannis2015,chondrogiannis2017,jeong2009}.

\begin{figure}[htb]
\centering
\includegraphics[width=\columnwidth]{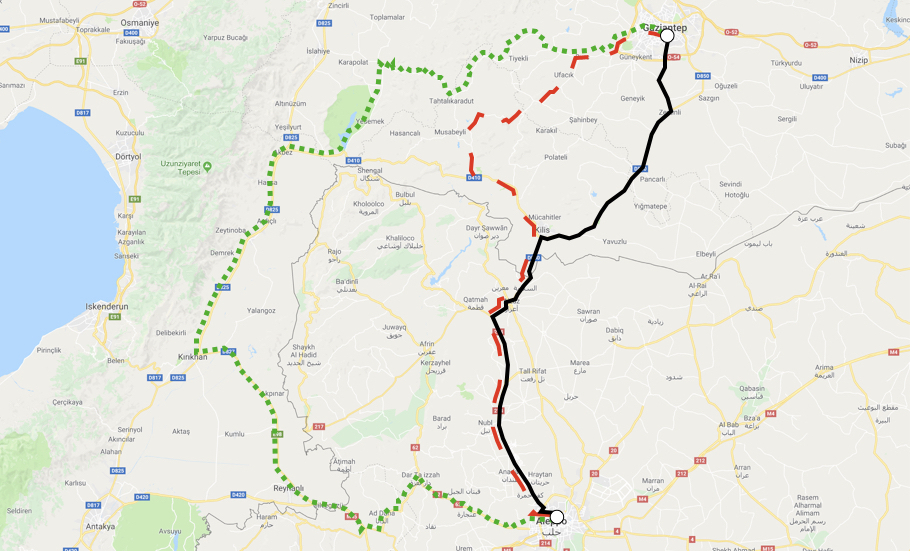}
\caption{Motivating example.}
\label{fig:example}
\end{figure}

In many real-world scenarios though, apart from ensuring the diversity of the recommended routes,
the collective length, i.e., the total distance covered by vehicles, must also be taken into account
to minimize the overall cost. Consider the scenario of transportation of humanitarian aid goods
through unsafe regions. The distribution of the load to several vehicles that follow routes dissimilar
can increase the chances that at least some of the goods will be delivered. The total distance
covered by the vehicles must also be taken into account to minimize the overall cost. For example,
Figure~\ref{fig:example} shows three different paths from the city of Gaziantep in Turkey to the city
of Aleppo in Syria. The solid/black line indicates the shortest path, the dashed/red line the next
path in length order, and the dotted/green line a path that is clearly longer, but also significantly
different from the other two. Choosing the black and the red paths is not the best option, since the
two paths share a large stretch. Among the other two options, the black-green pair has the minimum 
collective length and, hence, it is a better option than the red-green pair.

The aforementioned scenario is formally captured by the \emph{$k$-Dissimilar Paths with Minimum
Collective Length} (\kdsp) problem. Given two locations $s$ and $t$ on a road network, a \kdsp query
computes a set of $k$ paths from $s$ to $t$ such that: (1) all paths in the result set are sufficiently
dissimilar to each other (w.r.t.\ a user-defined similarity threshold), and (2) the set exhibits the
lowest collective path length among all sets of $k$ sufficiently dissimilar paths. \kdsp was originally
introduced by Liu et al.~\cite{liu2017} as the \emph{Top-$k$ Shortest Paths with Diversity} (Top-KSPD),
together with a greedy heuristic method that builds on the $K$-shortest paths~\cite{yen1971}.

In this paper, we present an in-depth analysis of the \kdsp problem. First, we conduct a theoretical
analysis to prove that \kdsp is strongly $NP$-hard. Second, we investigate the exact computation of
\kdsp queries, which was not covered by Liu et al.~\cite{liu2017}. We present an algorithm that,
similar to the approach of Liu et al., builds on the computation of the $K$-shortest paths~\cite{yen1971},
along with a pair of pruning techniques. Since such approaches require a prohibitively high number of
paths to be examined, we introduce the much smaller set of simple single-via paths, which extends the
concept of single-via paths~\cite{abraham2013}. Then, we present two algorithms that iterate over this
set of paths to compute \kdsp queries. Our experiments show that algorithms which iterate over all
possible $s{-}t$ paths cannot scale. Instead, iterating over the set of simple single-via paths can
lead to scalable solutions with a very small trade-off in the quality of the results. 

The rest of the paper is organized as follows.
In Section~\ref{sec:related} we discuss the related work. In Section~\ref{sec:preliminaries}
we introduce the necessary notation, we formally define the \kdsp problem and we show that
the problem is strongly $NP$-hard.
In Section~\ref{sec:kdsp} we present our exact algorithm along with two pruning techniques.
In Section~\ref{sec:heuristic} we introduce the concept of simple single-via paths and we present
two heuristic algorithms that use simple single-via paths to evaluate \kdsp queries.
Finally, in Section~\ref{sec:exps} we present the results of our experimental evaluation,
and Section~\ref{sec:conclusion} concludes this work.

\section{Related Work}\label{sec:related}

Different forms of alternative routing have been proposed in the past. Liu et al.~\cite{liu2017} introduced
the problem of finding Top-$k$ Shortest Paths with Diversity ($\mathit{Top{-}KSPD}$), which we study in this
paper as the \kdsp. Despite introducing the problem though, Liu et al. investigated only the approximate
computation of \kdsp queries, and they proposed the greedy heuristic algorithm \kspdplus, that builds
upon the computation of the $K$-Shortest Paths~\cite{yen1971}, while employing two pruning criteria to limit the
number of examined paths. According to the first criterion, partially expanded paths to the same node
$n$ are grouped together and the most promising ones are expanded first. The second pruning criterion
involves the computation of lower bounds either based on the estimated length or on the similarity of
paths to prioritize the examination of paths that are more likely to lead to a solution.

Another way to compute dissimilar alternative routes is to solve the $k$-Shortest Paths with Limited
Overlap (\kspwlo) problem, introduced by Chondrogiannis et al.~\cite{chondrogiannis2015}. A \kspwlo query
aims at computing paths that are (a) the shortest path is always included and (b) every new path added to the
set is alternative to all the shorter paths already in the set and as short as possible. The authors proposed
both exact and heuristic algorithms~\cite{chondrogiannis2015,chondrogiannis2017} to process \kspwlo queries,
using the path overlap~\cite{abraham2013} as similarity measure. In practice, the greedy
\kspdplus algorithm introduced by Liu et al.~\cite{liu2017} for the \kdsp is, unbeknownst to the authors,
an exact algorithm for the \kspwlo for arbitrary similarity measures, similar to the baseline approach of 
Chondrogiannis et al.~\cite{chondrogiannis2015}. Consequently, the \kspwlo can be seen as an approximation
to the \kdsp.

A similar approach to \kspwlo has been proposed by Jeong et al.~\cite{jeong2009}, that is laxly based on Yen's
algorithm~\cite{yen1971}. At each step, the algorithm modifies each previously computed path to construct a
set of candidate paths and examines the one that is most dissimilar to the already recommended paths. 
Akgun et al.~\cite{akgun2000} proposed a penalty-based method that doubles the weight of each edge that lies
on some already recommended path. The alternative paths are computed by repeatedly running Dijkstra's algorithm
on the input road network, each time with the updated weights. The shortcoming of this approach is that there
is no intuition behind the value of the penalty applied before each subsequent iteration. Lim et al. proposed
a similar penalty-based approach~\cite{lim2005} where the penalty is computed in terms of both the path overlap
and the total turning cost, i.e., how many times the user has to switch between roads when following a path.
In contrast to the \kdsp and the \kspwlo, neither approach of Jeong et al.\ nor the penalty-based methods 
of Akgun et al.\ and Lim et al.\ come with an optimization criterion w.r.t.\ the length of the alternative paths.

A different approach to alternative routing involves methods that focus on computing alternatives
only to the shortest paths. Such methods first compute a large set of candidate paths, and then determine
the final result set by examining the paths with respect to a number of user-defined constraints.
The \emph{Plateaux} method~\cite{cambridge2005} aims at computing paths that cross different highways
of the road network. Bader et al.~\cite{bader2011}, introduced the concept of alternative graphs which
have a similar functionality as the plateaus. Abraham et al.~\cite{abraham2013} introduced the notion
of \emph{single-via paths}, a set that we extend for developing our heuristic algorithms. The approach
of Abraham et al. evaluates each single-via path individually by comparing it to the shortest path, and
checks whether each path meets a set of user-defined constraints, i.e., length, local optimality and
stretch. Xie et al.~\cite{xie2012} define alternative paths to the shortest path using edge avoidance, i.e.,
given an edge of $e$ of the shortest path from $s$ to $t$, the alternative path is the shortest path from
$s$ to $t$ which avoids $e$, and introduce iSQPF, a quadtree-based spatial data structure inspired by~\cite{samet2008}. 
In contrast to our work, none of the aforementioned methods aims at minimizing the collective
length of the result set, or guarantees that the result paths will be dissimilar to each other.

Finally, the task of alternative routing can also be based on the pareto-optimal paths for multi-criteria
networks~\cite{delling2009,kriegel2010,mouratidis2010,shekelyan2015}. The pareto-optimal paths or the route
skyline can be directly seen as alternative routes to move from source node $s$ to target node $t$ or can be
further examined in a post-processing phase to provide the final alternative paths. However, our definition of
alternative routing is not a multi-criteria problem and the \kspwlo cannot be obtained by first computing the
pareto-optimal path set.

\section{Preliminaries} \label{sec:preliminaries}

Section~\ref{sec:preliminaries-notation} introduces the necessary notation while
Section~\ref{sec:preliminaries-kspd} formally defines the \kdsp problem and
Section~\ref{sec:preliminaries-complexity} analyzes its complexity.

\subsection{Notation} \label{sec:preliminaries-notation}

Let $G = (N,E)$ be a \emph{directed weighted} graph representing a road network with a set
of nodes $N$ and a set of edges $E\subseteq N \times N$.\footnote{For ease of presentation,
  we draw a road network as an undirected graph in our examples. However, our proposed methods
  directly work on directed graphs as well.}
Each edge $(n_i,n_j) \in E$ is assigned a \emph{positive} weight $w(n_i,n_j)$, which captures
the cost of moving from node $n_i$ to node $n_j$. A (simple) \emph{path} $p(s{\rightarrow}t)$
from a source node $s$ to a target node $t$ is a \emph{connected} and \emph{cycle-free} sequence
of edges $\langle (s,n_i),\ldots,(n_j,t)\rangle$. The \emph{length} $\len(p)$ of a path $p$ is
the sum of the weights of all contained edges, and the \emph{collective path length} $\mathcal{L(P)}$
of set of paths $P$ is the sum of the lengths of the paths in the set.The \emph{length} $\len(p)$
of a path $p$ is the sum of the weights of all contained edges, i.e.,
\begin{equation*}
  \len(p) = \sum_{\forall (n_i,n_j) \in p} w(n_i,n_j)
\end{equation*}
We denote by $\mathcal{L}$ the \emph{collective path length} for a set of paths $P$, i.e.,
\[
\mathcal{L}(P) = \sum_{\forall p \in P}{\ell(p)}
\]
The \emph{shortest path} $\sp(s{\rightarrow}t)$ is the path with the lowest length among all
paths that connect $s$ to $t$. 

Let $p$, $p'$ be two paths between nodes $s$, $t$. We denote the \emph{similarity} of the paths
as $\olFunc(p,p')$.  Given a similarity threshold $\theta$, we say that paths $p$, $p'$ are
\emph{sufficiently dissimilar} if $\olFunc(p,p') < \theta$. We also say that a path $p$ is
sufficiently dissimilar to a set of paths $P$ w.r.t.\ a threshold $\theta$, if $p$ is sufficiently
dissimilar with every path in $P$.

In the past, various measures have been proposed to compute path similarity (cf.~\cite{liu2017}).
Choosing the proper similarity measure heavily depends on the application and hence is out of the
scope of our work. Nevertheless, the algorithms we present operate with any arbitrary similarity
measure. Hence, without loss of generality, we use the Jaccard coefficient, i.e.,
\[\olFunc(p,p') = \frac{\sum_{\forall (n_i,n_j) \in p \cap p'}w(n_i,n_j)}{\sum_{\forall (n_i,n_j) \in p \cup p'}w(n_i,n_j)}\]
Table~\ref{tab:notation} summarizes the notation used throughout this paper.

\begin{table}[t]
	\small
	\centering
	\caption{Summary of notation.\label{tab:notation}}
	\begin{tabular}{|l|l|}
    	\hline
    	\textbf{Notation} & \textbf{Description} \\
    	\hline\hline
    	$G = (N,E)$ & Graph $G$ with nodes $N$ and edges $E$\\
    	$n$ & Node in $N$\\
    	$(n_i,n_j)$ & Edge from node $n_i$ to node $n_j$\\
    	$w(n_i,n_j)$ & Weight of edge $(n_i,n_j)$\\
    	$p = \langle (s,n_1),\dots,(n_k,t) \rangle$ & Path $p$ from node $s$ to node $t$\\
    	$\len(p)$ & Length of path $p$\\
    	$\LEN(P)$ & Collective length for a set of paths $P$\\
    	$\olFunc(p,p')$ & Similarity between two paths $p$ and $p'$\\ 
    	\hline
	\end{tabular}
\end{table}

\begin{example}
Consider the road network in Figure~\ref{fig:altrouting-alternativeex}. The shortest path between
nodes $s$ and $t$ is   $p_1{=}p_{sp}{=}\langle(s,n_3),(n_3,n_5),(n_5,t)\rangle$ with length
$\len(p_1) = 8$. Let $p_2=\langle(s,n_3),(n_3,n_5),(n_5,n_4),(n_4,t)\rangle$ and
$p_3{=}\langle(s,n_3),$ $(n_3,n_4),(n_4,t)\rangle$ be two more paths   that connect $s$ to $t$
with length $\len(p_2){=}9$ and $\len(p_3){=}10$, respectively. Assuming a similarity threshold
$\theta{=}0.5$, paths $p_{1}$ and $p_2$ are not sufficiently dissimilar to each other as their
similarity exceeds $\theta$, i.e., $\olFunc(p_1,p_2){=}6/11 \approx 0.545 >\theta$ due to the shared
edges $(s,n_3)$ and $(n_3,n_5)$. In contrast, paths $p_3$ and $p_1$ that share only edge $(s,n_3)$
are sufficiently dissimilar, i.e., $\olFunc(p_1,p_3){=}3/15{=}0.2 < \theta$.
\end{example}

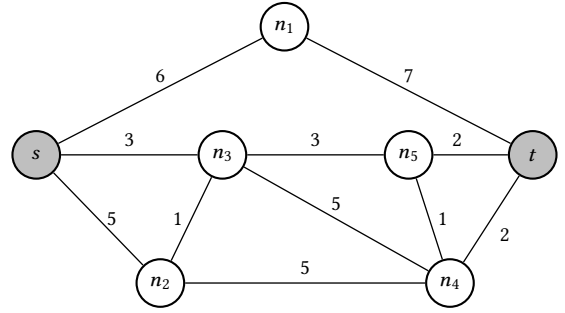
\begin{figure}[t]
\centering
\begin{tikzpicture}[
		auto,
		node distance=1cm,
  		thick,
  		inner/.style={
	  		circle,
  			draw,
  			font=\small,
  			minimum size = 18pt,
  			inner sep=0pt
  		},
  		point/.style={
  			circle,
  			draw,
  			font=\small,
  			fill=lightgray,
  			minimum size = 18pt,
  			inner sep=0pt
  		},
  		box/.style={
  			rectangle,
  			draw,
  			minimum width = 4cm,
  			font=\tiny
  		},
  		scale=1.1
	]
	  	\node[point]	(N0)	at (-3,1.75)		{$s$};
  	\node[inner]	(N1)	at (0,3.3)			{$n_1$};
  	\node[inner]	(N2)	at (-1.5,0.2)		{$n_2$};
  	\node[inner]	(N3)	at (-0.75,1.75)		{$n_3$};
	\node[inner]	(N4)	at (2,0.2)			{$n_4$};
	\node[inner]	(N5)	at (1.5,1.75)		{$n_5$};
	\node[point]	(N6)	at (3,1.75)			{$t$};
%
%
	\path[line width=0.5pt, every node/.style={font=\small}]
	   	(N0) edge node[above]	{6} (N1)
		(N0) edge node[right]	{5} (N2)
		(N0) edge node[above]	{3} (N3)
		(N1) edge node[above]	{7} (N6)
		(N2) edge node[left]	{1} (N3)
		(N2) edge node[above]	{5} (N4)
		(N3) edge node[above]	{5} (N4)
		(N3) edge node[above]	{3} (N5)
		(N4) edge node[right]	{1} (N5)
		(N4) edge node[below right]	{2} (N6)
		(N5) edge node[above left]	{2} (N6);
\end{tikzpicture}
\caption{Running example.}
\label{fig:altrouting-alternativeex}
\end{figure}

\subsection{Problem Definition} \label{sec:preliminaries-kspd}

We now formally restate the problem of finding \emph{Top-$k$ Shortest Paths with Diversity}~\cite{liu2017}
as the \emph{$k$-Dissimilar Paths with Minimum (Collective) Length (\kdsp)}, using the terminology of
Section~\ref{sec:preliminaries-notation}.
\begin{problem}[\kdsp]
Given a road network $G = (N,E)$, a source $s$ and a target node $t$ both in $N$, a number of requested
paths $k$, and a similarity threshold $\theta$, find the \PDSP set of paths from $s$ to $t$, such that:
\begin{enumerate}[(A)]
\item all paths in $\PDSP$ are pairwise sufficiently dissimilar,
\[
\forall p_i,p_j \in \PDSP \text{, } i \neq j : Sim(p_i,p_j) < \theta,
\]
\item $\vert \PDSP \vert \leq k$ and \PDSP has the maximum possible cardinality among every set of paths
$P_A$ that satisfy Condition (A), 
\[|
\PDSP| = \arg \max_{P_A} |P_A|\text{, and}
\]
\item \PDSP has the lowest collective path length among every set of paths $P_{AB}$ that satisfy both
Conditions (A) and (B),
\[
\PDSP = \arg \min_{P_{AB}} \LEN(P_{AB}).
\]
\end{enumerate}
\label{def:kdsp}
\end{problem}
\noindent Intuitively, a \kdsp query returns the maximal set of at most $k$ sufficiently dissimilar paths
w.r.t. threshold $\theta$, which have the lowest collective length among all sets of sufficiently dissimilar
$s{\rightarrow}t$ paths.
\begin{example}
Consider again the example in Figure~\ref{fig:altrouting-alternativeex} and paths
\begin{align*}
p_1 = p_{sp} & = \langle(s,n_3), (n_3,n_5), (n_5,t)\rangle \\
p_2 & = \langle (s,n_3),(n_3,n_5),(n_5,n_4),(n_4,t) \rangle \\
p_3 & = \langle (s,n_3),(n_3,n_4),(n_4,t) \rangle \\
p_4 & = \langle (s,n_2),(n_2,n_3),(n_3,n_5),(n_5,t) \rangle
\end{align*}
Let $P_1 =\{p_1,p_2,p_3\}$, $P_2 = \{p_1,p_3,p_4\}$, and $P_3 = \{p_2,p_3,p_4\}$ be three sets of paths
with $\LEN(P_1) = 27$, $\LEN(P_2) = 29$, and $\LEN(P_3) = 30$. Consider the query $\kdsp(s,t,3,0.5)$. 
While set $P_1$ has the lowest collective length, it cannot be the result set as $p_1$ and $p_2$ are not
sufficiently dissimilar, i.e., $\olFunc(p_1,p_2) = 6/11 = 0.545 > \theta$. On the other hand, both $P_2$
and $P_3$ contain sufficiently dissimilar paths, but $P_2$ is preferred as $\LEN(P_2) < \LEN(P_3)$.
In fact, $P_2$ is the set with the lowest collective length among all sets that contain three sufficiently
dissimilar paths. Hence, $P_2$ is the result of the query.
\end{example}

\subsection{Complexity} \label{sec:preliminaries-complexity}

We now elaborate on the complexity of the \kdsp problem. Liu et al.\ proved in~\cite{liu2017}
(cf.\@ Lemma~1) the $NP$-hardness of \kdsp. Despite the correctness of their finding, the authors'
approach on the proof is incorrect as they polynomially reduced \kdsp \emph{to} a hard problem, i.e.,
the Maximum Independent Set problem, instead of providing a polynomial reduction \emph{from} a hard
problem~\cite{arora2009}. In view of this, we hereby prove the following theorem.
\begin{theorem}
The \kdsp problem is strongly $NP$-hard.
\end{theorem}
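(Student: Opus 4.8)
The plan is to give a polynomial-time reduction from the decision version of \emph{Maximum Independent Set}: given an undirected graph $H=(V_H,E_H)$ and an integer $\kappa$, decide whether $H$ has an independent set of size $\kappa$. Since Maximum Independent Set is $NP$-hard and the reduction will produce \kdsp instances in which all edge weights, the number $k$, and the threshold $\theta$ are bounded by a polynomial in $|V_H|+|E_H|$, this simultaneously shows that \kdsp is $NP$-hard already on polynomially-bounded instances, i.e., that it is \emph{strongly} $NP$-hard. I work with the decision version of \kdsp: given $(G,s,t,k,\theta,B)$, is there a set of $k$ pairwise sufficiently dissimilar $s{\rightarrow}t$ paths whose collective length is at most $B$? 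By Conditions~(A)--(C) of Problem~\ref{def:kdsp}, this is exactly the question whether the \kdsp solution has cardinality $k$ and collective length at most $B$.

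Given $(H,\kappa)$ with $V_H=\{v_1,\dots,v_n\}$ and $|E_H|=m$, I would construct $G$ so that it contains, for every vertex $v_i$, a \emph{canonical} $s{\rightarrow}t$ path $P_i$. Concretely, $G$ has one heavy private ``identity'' segment per vertex routed directly out of $s$, one shared ``corridor'' $\gamma_e$ per edge $e\in E_H$, and a common spine of $m$ positions (one per edge of $H$, in a fixed order); $P_i$ first runs through $v_i$'s identity segment and then, at position $e$, traverses $\gamma_e$ if $e$ is incident to $v_i$ and a private filler segment otherwise. The weights are chosen so that (i) every canonical path has the same length $L$, and (ii) $\olFunc(P_i,P_j)\ge\theta$ \emph{iff} $(v_i,v_j)\in E_H$ --- the corridors being heavy relative to the private material, so that a single shared corridor already dominates the Jaccard ratio, while two non-adjacent canonical paths share only negligibly light private material. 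Finally I set $k=\kappa$ and $B=\kappa L$. The ``only if'' direction is then immediate: an independent set $\{v_{i_1},\dots,v_{i_\kappa}\}$ yields the feasible solution $\{P_{i_1},\dots,P_{i_\kappa}\}$, because pairwise non-adjacency gives $\olFunc(P_{i_a},P_{i_b})<\theta$ and the collective length equals $\kappa L=B$.

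For the converse I would show that every set of $\kappa$ pairwise sufficiently dissimilar $s{\rightarrow}t$ paths with collective length at most $B$ consists of $\kappa$ distinct canonical paths whose indices are pairwise non-adjacent in $H$. The identity segments are the device: since each is the only edge leaving $s$ on its branch, every $s{\rightarrow}t$ path uses exactly one of them, which partitions all $s{\rightarrow}t$ paths into $n$ \emph{classes}; any two paths in the same class share that segment's heavy edge, so their similarity exceeds $\theta$, whence a dissimilar set contains at most one path per class and a dissimilar set of size $\kappa$ occupies $\kappa$ distinct classes. One then checks that any path in class $i$ --- canonical or not --- is $\theta$-similar to $P_j$ whenever $v_j\sim v_i$, so the occupied classes are pairwise non-adjacent in $H$; together with the budget $B=\kappa L$ (which forces every chosen path to be of minimum length, which in the gadget we arrange to mean canonical) this exhibits an independent set of $H$ of size $\kappa$.

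The step I expect to be the main obstacle is controlling the \emph{spurious} $s{\rightarrow}t$ paths. Whenever two canonical paths share a corridor they can be spliced there into a new (``Frankenstein'') $s{\rightarrow}t$ path, so $G$ unavoidably has exponentially many non-canonical $s{\rightarrow}t$ paths, and the reduction must guarantee that none of them can undercut the length budget $B$ or enlarge a dissimilar set beyond $\alpha(H)$. This is exactly what pins down the quantitative design of the gadget: the identity segments must be heavy enough to confine and ``charge'' every path to a unique class; each corridor $\gamma_e$ must be heavy enough that sharing even one corridor already pushes the Jaccard similarity above $\theta$; the private material must be light enough that non-adjacent canonical paths stay strictly below $\theta$ and that no $s{\rightarrow}t$ path is shorter than $L$; and all these magnitudes must stay polynomial in $n+m$, which is what makes the resulting hardness \emph{strong}. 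Once weights meeting these competing constraints are fixed, ``$H$ has an independent set of size $\kappa$'' $\iff$ ``$(G,s,t,\kappa,\theta,\kappa L)$ is a yes-instance of \kdsp'', and the theorem follows.
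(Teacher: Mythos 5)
Your overall plan---a polynomial reduction \emph{from} Maximum Independent Set with polynomially bounded weights, so that the hardness is strong---is sound in principle (and is the direction Liu et al.\ got wrong), but the proof has a genuine gap exactly where you predict one, and the gadget you sketch does not close it. The converse direction rests on two claims: (i) the budget $B=\kappa L$ forces every chosen path to be canonical, and (ii) any path in class $i$ is $\theta$-similar to $P_j$ whenever $v_i\sim v_j$. With a \emph{common} spine in which the corridor $\gamma_e$ occupies the same position on every canonical path that traverses it, a spliced (``Frankenstein'') path that enters $\gamma_e$ from $P_i$'s side and exits on $P_j$'s side still crosses exactly one segment per spine position and therefore has length exactly $L$; the budget does not exclude it, so (i) fails. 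Likewise, a non-canonical class-$i$ path may avoid the corridor $\gamma_{(v_i,v_j)}$ altogether, and if the class-$j$ representative does the same, the two can be pairwise dissimilar even though $v_i\sim v_j$, so (ii) fails and the occupied classes need not induce an independent set. Whether these failure modes can be engineered away depends entirely on the unstated connectivity of the fillers and on concrete weight choices (heavy enough corridors, light enough private material, splices made strictly longer, everything polynomial); that quantitative design \emph{is} the proof, and none of it is supplied.

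For comparison, the paper takes a different route that sidesteps this difficulty entirely: it reduces from the two edge-disjoint directed paths problem ($2$-DP), which is strongly $NP$-complete, wrapping the input graph in a constant-size gadget with six extra nodes, fixing $k=4$ and $\theta=(2n+1)/(4n^2+5)$, and choosing weights $4n$ inside $G$ and $1$ outside. The gadget's structure lets one \emph{exhaustively characterize} every $s{\rightarrow}t$ path (each is one of two short detours or enters and exits $G$ exactly once through prescribed portals), so there are no spurious paths to control, and only Condition~(B) of Problem~\ref{def:kdsp} (the cardinality of the answer) is needed---no length budget at all. If you want to persist with the MIS route, you must either supply the weight calculations that kill the spliced paths or restructure the gadget so that, as in the paper, the set of all $s{\rightarrow}t$ paths can be enumerated explicitly.
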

\begin{proof}
We prove the lemma by polynomial reduction from the two edge-disjoint path problem in directed graphs
($2$-DP), which is known to be strongly $NP$-complete~\cite{eilam-tzoreff1998}. Given a directed graph
$G=(N,E)$ with $|N|=n$ and two source-target pairs $\{s_1,t_1\}$ and $\{s_2,t_2\}$, $2$-DP asks to correctly
decide if $G$ contains edge-disjoint  paths from $s_i$ to $t_i$ for $i=1,2$. For polynomially reducing
$2$-DP to the task of answering \kdsp queries, we we define a road network $G^\prime=(N^\prime,E^\prime)$
with $N^\prime=N\cup\{s,t,a,b,c,d\}$ and
$E^\prime=E\cup\{(s,a),(s,c),(a,b),(c,d),(b,s_1),(d,s_2),(t_2,a),(t_1,c),(b,t),(d,t)\}$.
We set $k=4$, $\theta=(2n+1)/(4n^2+5)$, $w(e)=4n$ for all $e \in E$, and $w(e)=1$ for all
$e\in E^\prime\setminus E$. Cf.\@ Figure~\ref{fig:proof} for a visualization of the network $G^\prime$.
We claim that there are edge-disjoint paths from $s_i$ to $t_i$ in $G$ just in case the result of a
$4$-DPwML query against $G^\prime$ has cardinality $4$. This implies that, unless $P=NP$, there can be
no polynomial or pseudo-polynomial algorithm for answering \kdsp queries.
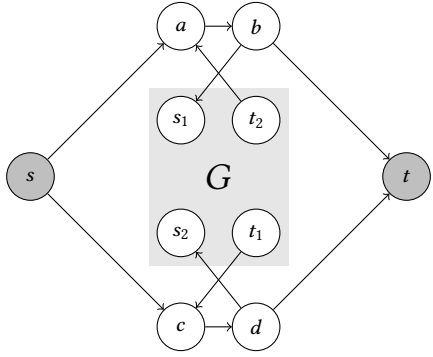
\begin{figure}
\usetikzlibrary{fit,backgrounds}
\begin{tikzpicture}[
inner/.style={
	circle,
  	draw,
  	font=\small,
  	minimum size = 18pt,
  	inner sep=0pt
  	},
  	point/.style={
  		circle,
  		draw,
  		font=\small,
  		fill=lightgray,
  		minimum size = 18pt,
  		inner sep=0pt
}]
\node[point] (s) at (-1,0) {$s$};
\node[inner] (a) at (1,2) {$a$};
\node[inner] (b) at (2,2) {$b$};
\node[inner] (c) at (1,-2) {$c$};
\node[inner] (d) at (2,-2) {$d$};
\node[point] (t) at (4,0) {$t$};
\node[inner,fill=white] (s-1) at (1,.75) {$s_1$};
\node[inner,fill=white] (s-2) at (1,-.75) {$s_2$};
\node[inner,fill=white] (t-2) at (2,.75) {$t_2$};
\node[inner,fill=white] (t-1) at (2,-.75) {$t_1$};
\node (G) at (1.5,0) {\huge $G$};
\begin{scope}[on background layer]
\node[fit=(s-1) (s-2) (t-1) (t-2),fill=black!10] {};
\end{scope}
\draw[->] (s) -- (a);
\draw[->] (a) -- (b);
\draw[->] (s) -- (c);
\draw[->] (c) -- (d);
\draw[->] (b) -- (s-1);
\draw[->] (t-2) -- (a);
\draw[->] (d) -- (s-2);
\draw[->] (t-1) -- (c);
\draw[->] (b) -- (t);
\draw[->] (d) -- (t);
\end{tikzpicture}
\caption{The network $G^\prime$ used for proving that \kdsp is $NP$-hard. All visualized edges
have weight $1$ and the edges contained in the sub-graph $G$ have weight $4n$, where $n$ is the
size of $G$.}\label{fig:proof}
\end{figure}

For proving the claim, we make use of the fact that the similarity of paths $p^\prime_i(s{\rightarrow}t)$
and $p^\prime_j(s{\rightarrow}t)$ can be expressed as follows:
\begin{equation}
\olFunc(p^\prime_i,p^\prime_j)=\frac{\overbrace{\sum_{e\in(p^\prime_i\cap p^\prime_j)\cap E}w(e)}^{A(p^\prime_i,p^\prime_j)}+\overbrace{\sum_{e\in(p^\prime_i\cap p^\prime_j)\setminus E}w(e)}^{B(p^\prime_i,p^\prime_j)}}{\underbrace{\sum_{e\in(p^\prime_i\cup p^\prime_j)\cap E}w(e)}_{C(p^\prime_i,p^\prime_j)}+\underbrace{\sum_{e\in(p^\prime_i\cup p^\prime_j)\setminus E}w(e)}_{D(p^\prime_i,p^\prime_j)}}
\end{equation}

We first assume that there are edge-disjoint paths $p_1=\langle s_1,\ldots,t_1\rangle$ and
$p_2=\langle s_2,\ldots,t_2\rangle$ in $G$. Consider the following paths in $G^\prime$
$p^\prime_1=\langle s,a,b,p_1,c,d,t\rangle$, $p^\prime_2=\langle s,c,d,p_2,a,b,t\rangle$,
$p^\prime_3=\langle s,a,b,t\rangle$, and $p^\prime_4=\langle s,c,d,t\rangle$.
Then the following statements immediately follow from the construction of $G^\prime$:

\begin{enumerate}
\item Since paths $p_1$ and $p_2$ are edge disjoint, we have $A(p^\prime_i,p^\prime_j)=0$ for all
$(i,j)\in\{1,\ldots,4\}\times\{1,\ldots,4\}$.
\item We have $B(p^\prime_3,p^\prime_4)=0$ and $B(p^\prime_i,p^\prime_j)=2$ for all
$(i,j)\in\{1,\ldots,4\}\times\{1,\ldots,4\}\setminus\{(3,4),(4,3)\}$.
\item Since the paths $p_1$ and $p_2$ use at least one edge in $E$, we have $C(p^\prime_i,p^\prime_4)\geq4n$
for all $(i,j)\in\{1,\ldots,4\}\times\{1,\ldots,4\}\setminus\{(3,4),(4,3)\}$.
\item We have $D(p^\prime_i,p^\prime_4)\geq7$ for all
$(i,j)\in\{1,\ldots,4\}\times\{1,\ldots,4\}\setminus\{(3,4),(4,3)\}$.
\end{enumerate}

These statements imply that, for all $(i,j)\in\{1,\ldots,4\}\times\{1,\ldots,4\}$, we have
$\olFunc(p^\prime_i,p^\prime_j)\leq2/(4n+7)<(2n+1)/(4n^2+5)=\theta$, and hence, that
$\PDSP=\{p^\prime_i\}^4_{i=1}$ is a result set for $4$-DPwML with cardinality four. 

For the other direction of the claim, assume that there is a set $\PDSP=\{p^\prime_i\}^4_{i=1}$
of four paths from $s$ to $t$ in $G^\prime$ with pairwise similarity less than $\theta=(2n+1)/(4n^2+5)$.
By construction of $G^\prime$, we know that each path either starts with the prefix
$\mathit{prefix}_1=\langle (s,a),(a,b)\rangle$ or with the prefix
$\mathit{prefix}_2=\langle (s,c),(c,d) \rangle$. Furthermore, we can see that a path
$p^\prime_i(s{\rightarrow}t)$ that starts with $\mathit{prefix}_1$ either equals
$\langle (s,a),(a,b),(b,t) \rangle$ or, at node $b$, enters $G$ through the edge $(b,s_1)$.
Assume that the latter is the case. Then we know that $p^\prime_i$ exits $G$ through the edge
$(t_1,c)$, since exiting $G$ through $(t_2,a)$ would close the cycle
$\langle (a,b),(b,s_1) (s_1,\ldots),(\ldots,t_2), (t_2,a) \rangle$ and hence contradict the
fact that $p^\prime_i$ is a path. Therefore, $p^\prime_i$ starts with the prefix
$\mathit{prefix}_1=\langle (s,a),(a,b),(b,s_1), \ldots,(t_1,c),(c,d) \rangle$. At node $d$,
$p^\prime_i$ cannot enter $G$ again, since both exit edges $(t_1,c)$ and $(t_2,a)$ would
close a cycle. This implies that $p^\prime_i=\langle s,a,b,s_1,\ldots,t_1,c,d,s\rangle$.
Analogously, we can show that a path $p^\prime_i(s{\rightarrow}t)$ that starts with
$\mathit{prefix}_2$ either equals $\langle (s,c),(c,d),(d,t) \rangle$ or is of the form
$p^\prime_i=\langle (s,c),(c,d),(d,s_2),(s_2,\ldots),(\ldots,t_2),(t_2,a),(a,b),(b,s) \rangle$.

These observations imply that $\PDSP$ contains paths
$p^\prime_1=\langle (s,a),(a,b) \rangle \cdot p_1 \cdot \langle (c,d) (d,s) \rangle$,
$p^\prime_2=\langle (s,c),(c,d) \rangle \cdot p_2 \cdot \langle (a,b),(b,s) \rangle$,
$p^\prime_3=\langle (s,a),(a,b),(b,t) \rangle$, and
$p^\prime_4=\langle (s,c),(c,d),(d,t) \rangle$, where $p_i$ are paths from some $s_i$ to some
$t_i$ in $G$. It remains to be shown that $p_1$ and $p_2$ are edge-disjoint. Assume that this
is not the case. Then the following statements hold:
\begin{enumerate}
\item Since $p_1$ and $p_2$ share at least one edge from $E$, we have
$A(p^\prime_1,p^\prime_2)\geq4n$.
\item Since $p^\prime_1$ and $p^\prime_2$ share exactly two edges from $E^\prime\setminus E$,
we have $B(p^\prime_1,p^\prime_2)=2$.
\item Since both $p_1$ and $p_2$ contain at most $n$ edges from $E$,
we have $C(p^\prime_1,p^\prime_2)\leq8n^2$.
\item Since each edge from $E^\prime\setminus E$ is contained in $p^\prime_1$ or in $p^\prime_2$,
we have $D(p^\prime_1,p^\prime_2)=10$.
\end{enumerate}
This statements imply that
\[
\olFunc(p^\prime_1,p^\prime_2)\geq(4n+2)/(8n^2+^0)=(2n+1)/(4n^2+5)=\theta
\]
which contradicts the fact that $\olFunc(p^\prime_1,p^\prime_2)<\theta$. Therefore, $p_1$
and $p_2$ are edge-disjoint, which yields the claim and finishes the proof of the theorem.
\end{proof}
\section{An Exact Approach} \label{sec:kdsp}

\begin{figure*}[t]
\centering
\begin{subfigure}[t]{0.31\textwidth }
\centering
\scalebox{1.1} {
\begin{tikzpicture}[
		auto,
		node distance=1cm,
 		thick,
  		point/.style={
  			circle,
  			draw,
  			font=\small,
  			fill=black,
  			minimum size = 5pt,
  			inner sep=0pt,
  			color=black
  		},
  		myline/.style={
  			line width=0.5pt,
  			every node/.style={font=\scriptsize},
  			color=black
  		},
	]
	\node[point]	(ROOT)	at (0,5)		{};
\node[point]	(N11)	at (-0.5,4)		{};
\node[point]	(N12)	at (0.5,4)		{};
\node[point]	(N13)	at (2,4)		{};
\node[point]	(N21)	at (0.5,3)		{};
\node[point]	(N22)	at (1.5,3)		{};
\node[point]	(N23)	at (2.5,3)		{};
\node[point]	(N31)	at (2.5,2)		{};
\node[font=\scriptsize, color=black]			(T1)	at (0.5,2.7)		{$\{p_1,p_2\}$};
\node[font=\scriptsize, color=black]			(T2)	at (1.5,2.7)		{$\{p_1,p_3\}$};
\node[font=\scriptsize, color=black]			(T3)	at (2.9,3.2)		{$\{p_2,p_3\}$};
\path[myline]
	(N23) 	edge node[left, color=black]		{$p_1$} (N31);
\path[myline,color=blue,line width=1.5pt]
   	(ROOT) edge node[left, color=black]	{$p_1$} (N11)
	(ROOT) edge node[left, color=black]	{$p_3$} (N13)
	(N13) edge node[left, color=black]	{$p_1$} (N22)
	(N13) edge node[left, color=black]	{$p_2$} (N23);
\path[myline,color=blue,line width=1.5pt,dashed]
	(ROOT) edge node[left, color=black]	{$p_2$} (N12)
	(N12) edge node[left, color=black]	{$p_1$} (N21);
\node[font=\scriptsize, color=black]		(L1)	at (2.5,1.7)	{$\{p_1,p_2,p_3\}$};
\node[font=\large,color=black]          (T1)    at (-0.3,5)     {$T_1$};
\end{tikzpicture}
}
\caption{Subsets of $P = \{p_1,p_2,p_3\}$.}
\end{subfigure}
\hspace{2ex}
\begin{subfigure}[t]{0.55\textwidth }
\centering
\scalebox{1.1} {
\begin{tikzpicture}[
		auto,
		node distance=1cm,
 		thick,
  		point/.style={
  			circle,
  			draw,
  			font=\small,
  			fill=black,
  			minimum size = 5pt,
  			inner sep=0pt,
  			color=black
  		},
  		myline/.style={
  			line width=0.5pt,
  			every node/.style={font=\scriptsize},
  			color=black
  		},
  		subtree/.style={
  			line width=1.5pt,
  			every node/.style={font=\scriptsize},
  			color=black
  		},
  		mylinered/.style={
  			dashed,
  			line width=1.5pt,
  			every node/.style={font=\scriptsize}
    	},
    	mylineredthin/.style={
  			dashed,
  			line width=0.5pt,
  			every node/.style={font=\scriptsize}
    	},
  		mylineblue/.style={
  			line width=1.5pt,
  			every node/.style={font=\scriptsize},
  			color=blue
  		},
  		mylinebluethin/.style={
  			line width=0.5pt,
  			every node/.style={font=\scriptsize},
  			color=blue
  		},
	]
	\node[fill=white,color=white]   (EMPTY) at (0,5.2)              {};
\node[point]    (ROOT)  at (0,5)                {};
\node[point]    (N11)   at (-0.5,4)             {};
\node[point]    (N12)   at (0.5,4)              {};
\node[point]    (N13)   at (2,4)                {};
\node[point]    (N14)   at (4,4)                {};
\node[point]    (N21)   at (0.5,3)              {};
\node[point]    (N22)   at (1.5,3)              {};
\node[point]    (N23)   at (2.5,3)              {};
\node[point]    (N24)   at (3.5,3)              {};
\node[point]    (N25)   at (4.5,3)              {};
\node[point]    (N26)   at (6.5,3)              {};
\node[point]    (N31)   at (2.5,2)              {};
\node[point]    (N32)   at (4.5,2)              {};
\node[point]    (N33)   at (5.85,2)             {};
\node[point]    (N34)   at (7.15,2)             {};
\node[font=\scriptsize, color=black]			(T1)	at (0.5,2.7)		{$\{p_1,p_2\}$};
\node[font=\scriptsize, color=black]			(T2)	at (1.5,2.7)		{$\{p_1,p_3\}$};
\node[font=\scriptsize, color=black]			(T3)	at (2.9,3.2)		{$\{p_2,p_3\}$};
\node[font=\scriptsize, color=black]			(T1)	at (3.5,2.7)		{$\{p_1,p_4\}$};
\node[font=\scriptsize, color=black]			(T2)	at (4.9,3.2)		{$\{p_2,p_4\}$};
\node[font=\scriptsize, color=black]			(T2)	at (6.9,3.2)		{$\{p_3,p_4\}$};
\node[font=\scriptsize, color=black]            (T1)    at (2.5,1.7)    {$\{p_1,p_2,p_3\}$};
\node[font=\scriptsize, color=black]            (T2)    at (4.5,1.7)    {$\{p_1,p_2,p_4\}$};
\node[font=\scriptsize, color=black]            (T3)    at (5.85,1.7)           {$\{p_1,p_3,p_4\}$};
\node[font=\scriptsize, color=black]            (T4)    at (7.15,1.7)           {$\{p_2,p_3,p_4\}$};
\path[myline]
        (ROOT)  edge node[left,color=black]         {$p_4$} (N14)
        (N23)   edge node[left,color=black]         {$p_1$} (N31)
        (ROOT)  edge node[left,color=black]         {$p_1$} (N11)
        (ROOT)  edge node[left,color=black]         {$p_3$} (N13)
        (N13)   edge node[left,color=black]         {$p_1$} (N22)
        (N13)   edge node[left,color=black]         {$p_2$} (N23);
\path[mylineblue]
        (N14)   edge node[left,color=black]         {$p_1$} (N24)
        (N14)   edge node[left,color=black]         {$p_3$} (N26)
        (N26)   edge node[left,color=black]         {$p_1$} (N33)
        (N26)   edge node[left,color=black]         {$p_2$} (N34);
\path[mylineblue,dashed]
        (N14)   edge node[left,color=black]             {$p_2$} (N25)
        (N25)   edge node[left,color=black]             {$p_1$} (N32);
\path[mylineredthin]
        (ROOT)  edge node[left,color=black]             {$p_2$} (N12)
        (N12)   edge node[left,color=black]             {$p_1$} (N21);
%
\node[font=\large,color=black]          (T1)    at (4.3,4.2)    {$T_2$};
\end{tikzpicture}
}
\caption{Subsets of $P = \{p_1,p_2,p_3,p_4\}$.}
\end{subfigure}
\caption{Binomial trees for generating subsets of paths with cardinality up to 3.}
\label{fig:kdsp-bintrees}
\end{figure*}
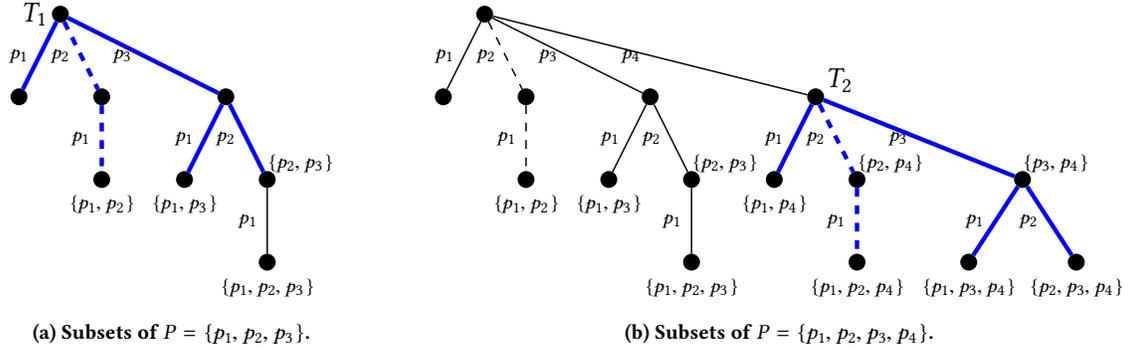

A na{\"i}ve approach for an exact solution to \kdsp would first identify all paths from a source node $s$
to a target node $t$ and examine all possible sets of at most $k$ paths to determine the set that satisfies
the conditions of Problem~1. Such an approach is clearly impractical. In view of this, Section~\ref{sec:kdsp:pruning}
defines a pair of pruning techniques, and Section~\ref{sec:kdsp:exact} presents an exact algorithm
that employs these techniques to reduce the search space during \kdsp query processing.

\subsection{Pruning Techniques} \label{sec:kdsp:pruning}

\paragraph{Lower Bound on Collective Path Length} Our first pruning technique employs a lower bound on the
collective path length, which limits the total number of paths to be constructed. Let \PALL be the set of
all possible paths from node $s$ to $t$, and $p^*_i$ be the $i$-shortest path in \PALL (i.e., $p^*_1 = \sp$).
Then, for every path $p$ in $\PALL$, the collective path length of every
$k$-subset $P \subset \PALL$ that contains $p$ is lower bounded as
follows:
\begin{equation}
\label{eq:lbound}
 \mathcal{L}(P) \geq \ell(p) + \sum_{i=1}^{k{-}1}\ell(p^*_i)
\end{equation}
Intuitively, the right part of the above inequality defines a set of
paths that contain $p$ and the $(k{-}1)$ shortest paths in \PALL; by
definition, such a set should have the lowest collective path length
compared to any subset of \PALL that includes $p$.

We can use the lower bound of Inequality~\ref{eq:lbound} to exclude a path $p$ from the
result set of \PDSP. The idea is captured by the following lemma:
\begin{lemma}
\label{lm:kdsp-term}
Let \PALL be the set of all paths from $s$ to $t$, $P_{A} \subseteq \PALL$ be a set
of $k$ paths that satisfy Condition (A) of Problem~\ref{def:kdsp}, i.e., the contained paths
are sufficiently dissimilar to each other, and $P^*_{k-1}$ be the set of the $k-1$ shortest
paths in \PALL. A path $p \in \PALL \setminus P_{A}$ cannot be part of the \kdsp result if
$\LEN(\{p\}\cup P^*_{k-1}) > \LEN(P_{A})$.
\end{lemma}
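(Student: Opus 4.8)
The plan is to argue by contraposition: I show that if a path $p \in \PALL \setminus P_{A}$ \emph{is} contained in the \kdsp result $\PDSP$, then necessarily $\LEN(\{p\}\cup P^*_{k-1}) \le \LEN(P_{A})$, which is exactly the negation of the lemma's hypothesis. So throughout, assume $p \in \PDSP$.

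First I would pin down the cardinality of $\PDSP$. Since $P_{A}$ is a set of $k$ pairwise sufficiently dissimilar paths, it satisfies Condition~(A) of Problem~\ref{def:kdsp}; hence the maximum cardinality of any set satisfying Condition~(A) is at least $k$, and together with the bound $|\PDSP| \le k$ from Condition~(B) this forces $|\PDSP| = k$. Moreover, $P_{A}$ is itself a set of $k$ paths satisfying Conditions~(A) and~(B), so by the optimality Condition~(C) we have $\LEN(\PDSP) \le \LEN(P_{A})$. This step is where the ``feasible witness'' $P_{A}$ is really used.

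Next, since $\PDSP$ is a $k$-subset of $\PALL$ containing $p$, I would apply the lower bound of Inequality~\ref{eq:lbound} with $P := \PDSP$, obtaining $\len(p) + \sum_{i=1}^{k-1}\len(p^*_i) \le \LEN(\PDSP)$. On the other side, $\LEN(\{p\}\cup P^*_{k-1}) \le \len(p) + \LEN(P^*_{k-1}) = \len(p) + \sum_{i=1}^{k-1}\len(p^*_i)$, with equality unless $p$ coincides with one of the $k-1$ shortest paths, in which case the left-hand side is only smaller. Chaining these facts with the conclusion of the previous paragraph gives
\[
  \LEN(\{p\}\cup P^*_{k-1}) \;\le\; \len(p) + \sum_{i=1}^{k-1}\len(p^*_i) \;\le\; \LEN(\PDSP) \;\le\; \LEN(P_{A}),
\]
which contradicts $\LEN(\{p\}\cup P^*_{k-1}) > \LEN(P_{A})$. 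Hence $p$ cannot belong to $\PDSP$, proving the lemma.

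There is no deep obstacle here: once Inequality~\ref{eq:lbound} is available the argument is a three-link chain of inequalities. The only point requiring care is the cardinality bookkeeping — one must invoke the existence of the $k$-element feasible set $P_{A}$ together with Condition~(B) to be certain that the optimum $\PDSP$ has exactly $k$ elements, so that Inequality~\ref{eq:lbound}, which is stated for $k$-subsets, may legitimately be applied to $\PDSP$. The degenerate case $p \in P^*_{k-1}$ is harmless, as it only slackens the first inequality in the chain.
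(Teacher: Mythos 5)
Your proposal is correct and follows essentially the same route as the paper's proof: both rest on the observation that $P^*_{k-1}\cup\{p\}$ is the cheapest ($\leq k$)-subset of \PALL containing $p$, and that the feasible witness $P_A$ then beats any set containing $p$, so the optimum cannot include $p$. Your contrapositive phrasing merely makes explicit the cardinality bookkeeping via Conditions (B) and (C) that the paper's argument leaves informal.
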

\begin{proof}
  By definition, $P^*_{k-1}$ is the set of $k{-}1$
  paths with the minimum possible collective length among all subsets
  containing $k{-}1$ paths. Hence, the subset $P$ which contains $p$
  and achieves the least possible path length sum is
  $P = P^*_{k-1} \cup \{p\}$. Therefore, if there
  exists a set of sufficiently dissimilar paths $P_{A}$, i.e.,
  $\forall p_i,p_j \in P_{A}$, $i \neq j : Sim(p_i,p_j) < \theta$,
  $p \not\in P_{A}$ 
  and $\LEN(P_{A}) < \LEN(P)$, then the path length sum of any
  set $P'$ with $p \in P'$ will be greater than the path length sum of
  $P_{A}$. Consequently, there exist no set of at most $k$ dissimilar
  paths, which contains $p$ (or any path longer than $p$), and
  achieves a collective length smaller that $\LEN(P_{A})$.
\end{proof}

With Lemma~\ref{lm:kdsp-term}, when processing a \kdsp query,
it suffices to construct paths from source node $s$ to target node $t$ in
increasing length order and use the lemma as a
termination condition. If the next path $p$ in length order
cannot be part of the final result set, then the current set of paths 
achieves the lowest possible path length and, hence, it is the final
\PDSP result.

\paragraph{Excluding Subsets of Similar Paths}

Our second pruning technique prevents the construction of path sets
that do not exclusively contain sufficiently dissimilar paths. The key
idea behind this technique is to incrementally generate the path sets
by reusing already generated smaller subsets. For this purpose, we
employ a dynamic programming scheme named "filling a rucksack" or Algorithm $F$ for
simplicity~\cite{knuthTAOCP}, that builds on the concept of \emph{binomial} trees.
Given a set of paths $P$, we use a binomial tree $T$ of height $k$ to represent all
subsets $P_i \subseteq P$ of cardinality $\vert P_i \vert \leq k$. Each time a new
path $p'$ is added to set $P$, Algorithm $F$ extends all existing subsets of
cardinality up to $k{-}1$ to generate the subsets of cardinality up to $k$ that
contain $p'$. More specifically, a new branch is attached to the root of the
binomial tree using an edge labeled by $p'$, and the subtree representing 
subsets of cardinality up to $k{-}1$ is added under the new branch. Note that during
this expansion phase, the height of the binomial tree remains fixed but the tree
becomes wider.

\begin{example}
Consider the example in Figure~\ref{fig:kdsp-bintrees}. Figure~\ref{fig:kdsp-bintrees}a
illustrates the binomial tree of height $k{=}3$ for the path set $P = \{p_1,p_2,p_3\}$. 
The leaf nodes of the tree represent all possible subsets of $P$ containing at most $k{=}3$
paths. Also, the subtree $T_1$ (indicated with  blue lines) represents all possible subsets
of $P$ of cardinality up to $k{-}1=2$. Then, Figure~\ref{fig:kdsp-bintrees}b
illustrates the expanded binomial tree after adding a new path $p_4$ to $P$. 
Observe the new branch attached to the root node via the edge representing $p_4$.
Also, the subtree $T_2$ attached under the new branch (also indicated with blue lines)
is identical to subtree $T_1$.
\end{example}

During the execution of Algorithm F, instead of generating the
contents of each new branch from scratch, the algorithm simply copies the subtree that
represents all subsets of cardinality up to $k{-}1$ before adding the new path.
Consequently, when computing a \kdsp query, it suffices to apply Algorithm $F$ to
incrementally generate new subsets of paths and early prune subsets
that contain at least one pair of not sufficiently dissimilar paths.
In Figure~\ref{fig:kdsp-bintrees} for example, assume that the similarity of paths
$p_1$ and $p_2$ exceeds the given threshold $\theta$, i.e.,
$\olFunc(p_1,p_2) > \theta$. That is, the paths are not sufficiently
dissimilar and consequently, no subset containing paths $p_1$ and
$p_2$ can be part of the \kdsp result. In this case, the branch 
representing the $\{p_1,p_2\}$ subset indicated by the dashed line is excluded
from every subtree of the binomial tree and hence, all subsets that contain both $p_1$
and $p_2$ are never generated.

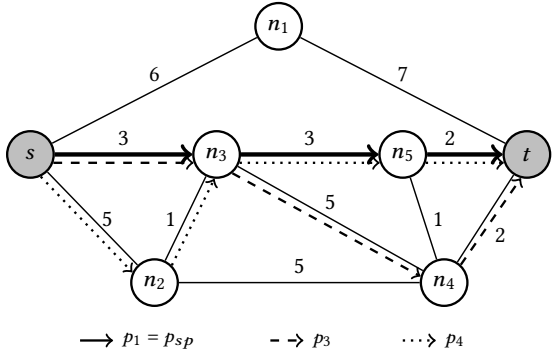
\begin{figure*}[t]
\centering
\scalebox{1.1} {
\begin{tikzpicture}[
		auto,
		node distance=1cm,
 		thick,
		inner/.style={
	  		circle,
  			draw,
  			font=\small,
  			minimum size = 16pt,
  			inner sep=0pt
  		},
  		point/.style={
  			circle,
  			draw,
  			font=\small,
  			fill=lightgray,
  			minimum size = 16pt,
  			inner sep=0pt
  		},
  		box/.style={
  			rectangle,
  			draw,
  			minimum width = 4cm,
  			font=\tiny
  		}
	]
	  	\node[point]	(N0)	at (-3,1.75)		{$s$};
  	\node[inner]	(N1)	at (0,3.3)			{$n_1$};
  	\node[inner]	(N2)	at (-1.5,0.2)		{$n_2$};
  	\node[inner]	(N3)	at (-0.75,1.75)		{$n_3$};
	\node[inner]	(N4)	at (2,0.2)			{$n_4$};
	\node[inner]	(N5)	at (1.5,1.75)		{$n_5$};
	\node[point]	(N6)	at (3,1.75)			{$t$};
%
%
	\path[line width=0.5pt, every node/.style={font=\small}]
	   	(N0) edge node[above]	{6} (N1)
		(N0) edge node[right]	{5} (N2)
		(N0) edge node[above]	{3} (N3)
		(N1) edge node[above]	{7} (N6)
		(N2) edge node[left]	{1} (N3)
		(N2) edge node[above]	{5} (N4)
		(N3) edge node[above]	{5} (N4)
		(N3) edge node[above]	{3} (N5)
		(N4) edge node[right]	{1} (N5)
		(N4) edge node[below right]	{2} (N6)
		(N5) edge node[above left]	{2} (N6);

	\draw[->] 			(-2.4,-0.5) 	-- 	(-2,-0.5) node[anchor=west]{\scriptsize $p_1=\sp$};
	\draw[->,dashed] 	(-0.1,-0.5) -- 	(0.3,-0.5) node[anchor=west]{\scriptsize $p_3$};	
	\draw[->,dotted] 	(1.5,-0.5) -- 	(1.9,-0.5) node[anchor=west]{\scriptsize $p_4$};
%
%
\path[->, line width=1.5pt, every node/.style={font=\scriptsize}]
    (N0) edge node[above]       {} (N3)
    (N3) edge node[above]       {} (N5)
    (N5) edge node[above]       {} (N6);
\path[->, dashed, every node/.style={font=\scriptsize}]
    (N0.340) edge node[above]       {} (N3.200)
    (N3.310) edge node[above]       {} (N4.165)
    (N4.47) edge node[above]        {} (N6.256);
\path[->, dotted, every node/.style={font=\scriptsize}]
    (N0.295) edge node[above]       {} (N2.155)
    (N2.45) edge node[above]       {} (N3.265)
    (N3.340) edge node[above]        {} (N5.200)
    (N5.340) edge node[above]        {} (N6.200);
\node (T)       at (7.5,1.8)    {
    \resizebox{0.85\columnwidth}{!}{%
    \begin{tabular}{|c|c|c|}
    	\hline
        \multicolumn{2}{|c|}{\textbf{$\PALL$}} & \textbf{$length$} \\
        \hline\hline
        \cellcolor{gray!20}$p_1 = \sp$ &	\cellcolor{gray!20}$\langle (s,n_3),(n_3,n_5),(n_5,t) \rangle$  			&   \cellcolor{gray!20}8  \\ 
        \hline
        $p_2$   	&   $\langle (s,n_3),(n_3,n_5),(n_5,n_4),(n_4,t) \rangle$   &   9  \\ 
        \hline
        \cellcolor{gray!20}$p_3$    &   \cellcolor{gray!20}$\langle (s,n_3),(n_3,n_4),(n_4,t) \rangle$  			&  \cellcolor{gray!20}10 \\ 
        \hline
        \cellcolor{gray!20} $p_4$   &   \cellcolor{gray!20}$\langle (s,n_2),(n_2,n_3),(n_3,n_5),(n_5,t) \rangle$   &    \cellcolor{gray!20}11 \\ 
        \hline
        $p_5$   	&  $\langle (s,n_3),(n_3,n_2),(n_2,n_4),(n_4,t) \rangle$   &   11 \\ 
        \hline
        $p_6$       &   $\langle (s,n_3),(n_3,n_4),(n_4,n_5),(n_5,t) \rangle$ 	&   11 \\ 
        \hline
        $p_7$   	&   $\langle (s,n_2),(n_2,n_3),(n_3,n_5),(n_5,n_4),(n_4,t) \rangle$   &   12 \\ 
        \hline
    \end{tabular}
    }
};

\end{tikzpicture}
}
\caption{\kspdml computing $\kdsp(s,t,3,0.5)$; result paths highlighted in light gray.}
\label{fig:kdsp-bslex}
\end{figure*}

\subsection{The \kspdml Algorithm} \label{sec:kdsp:exact}

We now present our exact \kspdml algorithm for \kdsp, which employs the pruning techniques discussed
in Section~\ref{sec:kdsp:pruning}. The algorithm builds upon the computation of the $K$-Shortest
Paths~\cite{yen1971}, with the goal to progressively compute the exact solution. At each round,
\kspdml employs Algorithm $F$ from \cite{knuthTAOCP} to generate subsets of at most $k$ paths.
Using our second pruning technique, subsets that violate the similarity constraint are filtered out.
The algorithm examines only the subsets with the largest possible cardinality and, among those,
only the one with the lowest collective path length is retained. \kspdml uses our first pruning
technique as the termination condition, i.e., the algorithm terminates after a path that satisfies
the criterion of Lemma~\ref{lm:kdsp-term} is constructed.

Algorithm~\ref{algo:kdsp-algo} illustrates the pseudocode of our exact
\kspdml algorithm, which employs the two aforementioned pruning
techniques. All generated shortest paths are stored in set \PSP. In
Line~2, we set $p$ as the first shortest path from $s$ to $t$. From
Line~3 to~11, \kspdml iterates over the next shortest path starting
from the first one. In Line~4, current path $p$ is stored in \PSP. In
Lines~5--6, the collective length $\mathcal{L}_{k{-}1}$ of the first
$k{-}1$ shortest paths is computed to be used for the lower bound in
Line~3. Next, Algorithm $F$~\cite{knuthTAOCP} is called in Line~7 to
determine all $(\leq k)$-subsets $P$ of \PSP that contain $p$. For
each subset $P$, Line~8 checks whether all paths in $P$ are
sufficiently dissimilar to each other (Condition (A) of
Problem~\ref{def:kdsp}). Subsequently, Line~9 checks whether $P$
contains more paths than $\PDSP$ (Condition (B) of
Problem~\ref{def:kdsp}), or $P$ contains as many paths as \PDSP and has a lower collective length than
current \PDSP (Conditions (B) and (C) of Problem~\ref{def:kdsp}).
If either case holds, \kspdml updates the $\PDSP$ result in Line~10.
The examination of each next shortest
path (and \kspdml overall) terminates when either all possible paths
from $s$ to $t$ have been generated or the termination condition of
Line~3 is met.

\begin{algorithm}[htb]
  \caption{\kspdml}
  \label{algo:kdsp-algo}
  \SetKwFunction{NextShortestPath}{NextShortestPath}%
  \SetKwComment{com}{\hfill $\triangleright$ }{}%
  \KwIn{Road network $G = (N,E)$, source node $s$, target node $t$, number of results $k$, similarity threshold $\theta$}%
  \KwOut{Set $\PDSP$ of at most k paths}%
  \BlankLine%
  \textbf{initialize} 
  $\PSP \gets \emptyset$, 
  $\mathcal{L}_{k{-}1} \gets 0$, $\PDSP \gets \emptyset$, $P \gets \emptyset$\;%
  $p \gets \NextShortestPath(G,s,t)$\com*{Shortest path $\sp$}%
  \BlankLine%
  \While{$p{\neq}\mathit{null}$ \textbf{and} ($\vert \PDSP \vert{<}k$ \textbf{or} $\len(p){+}\mathcal{L}_{k{-}1}{\leq} \LEN(\PDSP)$)}
  {
   	$\PSP \gets \PSP \cup \{p\}$\;%
    \If{$|\PSP| < k$}
    {
    	$\mathcal{L}_{k{-}1} \gets \mathcal{L}_{k{-}1} + \ell(p)$\;%
    }
    \ForEach(\com*[f]{Alg. $F$~\cite{knuthTAOCP}}){$P \subseteq \PSP : \vert P \vert \leq k$ with $p\in P$}{%
      \If(\com*[f]{Problem~\ref{def:kdsp} Condition (A)}){$\forall$ $p_i,p_j\in P$ with $i\neq j : \olFunc(p_i,p_j) < \theta$}
      {
      	\If(\com*[f]{Problem~\ref{def:kdsp} Conditions (B), (C)}){$|P| > |\PDSP|$ \textbf{or} \hspace{5cm} $|P| = |\PDSP|$ \textbf{and} $\LEN(P) < \LEN(\PDSP)$}
      	{
          $\PDSP \gets P$\com*{Update result set}
        }
      }
    }
    $p \gets \NextShortestPath(G,s,t)$\;%
  }%
  \Return $\PDSP$\;%
\end{algorithm}

\begin{example} \label{ex:exact-kdspalgo}
Figure~\ref{fig:kdsp-bslex} illustrates the execution of \kspdml for the $\kdsp(s, t, 3, 0.5)$ query;
on the right-hand side of the figure, we report the $\PSP$ set with the examined paths in length order.
Initially, the shortest path $p_1{=}p_{sp}$ is generated and the result set $\PDSP = \{p_1=\sp\}$ is
initialized. Next, path $p_2$ is computed, but since $\olFunc(p_1,p_2) = 0.545 > 0.5$ subset $\{p_1,p_2\}$
is never constructed. At this point though, $\mathcal{L}_{k{-}1} = 8 + 9 = 17$ is computed. Then, the
algorithm computes the next shortest path in length order, i.e., $p_3$. The computation of $p_3$ results
in no sets of $k$ sufficient dissimilar paths. In fact, there exists a single subset $\{p_1,p_2,p_3\}$
of \PALL that contains $p_3$ but it is never constructed as $\olFunc(p_1,p_2) > \theta$. In contrast,
there exists set $P_1 = \{p_2,p_3\}$ which contains sufficiently dissimilar paths and 
$\vert P_1 \vert > \vert \PDSP \vert$. Hence, \PDSP is updated, i.e., $\PDSP = P_1$. Next, the 4th
shortest path $p_4$ is generated and subsets $P_2 = \{p_1,p_3,p_4\}$ and $P_3 = \{p_2,p_3,p_4\}$ are
constructed both containing more sufficiently dissimilar paths than the current result. Since $P_2$ has
the lowest collective length, i.e., $\LEN(P_2) = 29$, it becomes the temporary result set, i.e. $\PDSP = P_2$.
The algorithm continues its execution as $\len(p_4) + \mathcal{L}_{k{-}1} > \LEN(\{p_1,p_3,p_4\})$
(Lemma~\ref{lm:kdsp-term}).  Note that, since a temporary result of $k$ sufficiently paths has been found,
there is no need to examine subsets of less than $k$ paths from now on. The subsequent generation of paths
$p_5$ and $p_6$ result in the construction of subsets none of which achieve lower collective length
that $P_2$. However, the algorithm terminates only when the 7th shortest path $p_7$ is retrieved, for which
$\len(p_7) + \mathcal{L}_{k{-}1} \geq \LEN(P_2)$ and so, the final result $\PDSP = P_2 = \{p_1,p_3,p_5\}$
is returned.
\end{example}

\subsubsection*{Complexity Analysis}
In the worst case, \kspdml has to examine all $\binom{K}{k}$ subsets containing $k$ $(s{\rightarrow}t)$ paths,
where $K$ is the number of all paths from $s$ to $t$ that the algorithm needs to examine. Even for $k=O(1)$,
the best complexity guarantee one can give for \kspdml is hence $O(\poly(K))$. Since $K$ is not polynomially
bounded in the size of the network $G$, this is prohibitively expensive. In fact, $K$ is usually very large;
for random graphs with density $d$, the expected value is $\mathbb{E}(K)=\Omega((|N|-2)!d^{|N|})$~\cite{roberts2007}.

\section{Heuristic Approaches} \label{sec:heuristic} 

Our complexity analysis in Section~\ref{sec:kdsp:exact} showed that the cost of \kspdml
is prohibitively high due to the number $K$ of paths that need to be examined;
in practice, we expect this $K$ to be significantly larger than the requested number of
results $k$. Consequently, solutions that build on the computation of the $K$-shortest paths,
i.e., both our exact \kspdml and the heuristic \kspdplus algorithm of Liu et al.~\cite{liu2017}, cannot
scale to real-world road networks. In this spirit, Section~\ref{sec:heuristic-ssvp} introduces
the concept of \emph{simple single-via} paths which allow us to drastically reduce the search
space. Then, Sections~\ref{sec:heuristic-svpdml} and~\ref{sec:heuristic-svpdplus} present our
heuristic algorithms for the \kdsp problem that iterate over the set of simple single-via paths
in length order.

\subsection{Simple Single-Via Paths}\label{sec:heuristic-ssvp}

The concept of \emph{single-via paths} (SVP) initially proposed by Abraham
et al.~\cite{abraham2013} and then optimized by Luxen and Schieferdecker~\cite{luxen2015},
has been primarily used for alternative routing on road networks \cite{abraham2013,luxen2015,chondrogiannis2017}.
Given a road network $G{=}(N,E)$, a source node $s$ and a target node $t$, the single-via path
$p_{sv}(n)$ of a node $n \in N \setminus \{s,t\}$ is defined as $\sp(s{\rightarrow}n) \circ \sp(n{\rightarrow}t)$,
i.e., the concatenation of shortest paths $\sp(s{\rightarrow}n)$ and $\sp(n{\rightarrow}t)$.
By definition, $p_{sv}(n)$ is the shortest possible path that connects $s$ and $t$ through $n$.

However, using the set of single-via paths to process \kdsp queries
raises two important issues.  First, the single-via path $p_{sv}(n)$
of every node $n$ crossed by the shortest path $\sp(s{\rightarrow}t)$
is identical to \sp. Computing the single-via paths for these
particular nodes is unnecessary.  Second, there is no guarantee that a
single-via path is simple (i.e., cycle-free), which may result in
recommending paths that make little sense from a user perspective. For
instance, consider the road network in Figure~\ref{fig:kdspsvp}. The
single-via path of node $n_2$ is
$p_{sv}(n_2){=}\sp(s{\rightarrow}n_2) \circ \sp(n_2{\rightarrow}t)$,
which is clearly not a simple path, i.e.,
$p_{sv}(n_2){=}\langle (s,n_3),(n_3,n_2),(n_2,n_3),(n_3,n_5),(n_5,t)
\rangle$.

To address the aforementioned issues, we introduce the \emph{simple
single-via paths} (SSVP). Given a road network $G{=}(N,E)$, a source node
$s$ and a target node $t$, the SSVP $p_{ssv}(n)$ of a node
$n \in N \setminus \{s,t\}$ is defined only if $n$ does not lie on the
shortest path $\sp(s{\rightarrow}t)$. If the single-via path
$p_{sv}(n)$ of node $n$ is simple, then $p_{ssv}(n){=}p_{sv}(n)$.
Otherwise, $p_{ssv}(n)$ is the concatenation either of
$\sp(s{\rightarrow}n)$ with the shortest path $p'(n{\rightarrow}t)$
from $n$ to $t$ that visits no nodes in $\sp(s{\rightarrow}n)$, i.e.,
$sp(s{\rightarrow}n) \circ p'(n{\rightarrow}t)$, or the concatenation
of the shortest path $p'(s{\rightarrow}n)$ from $s$ to $n$ that visits
no nodes in $\sp(n{\rightarrow}t)$ with $\sp(n{\rightarrow}t)$, i.e.,
$p'(s{\rightarrow}n) \circ \sp(n{\rightarrow}t)$. In this case, the
SSVP of $n$ is the concatenated path with the lowest path length.
Note that the shortest path $\sp(s{\rightarrow}t)$ is a SSVP by
definition.

Consider again the road network in Figure~\ref{fig:altrouting-alternativeex}. As mentioned already,
the single-via path $p_{sv}(n_2)$ is not simple. Hence, the simple single-via path $p_{ssv}(n_2)$
is either $p = \langle (s,n_2), (n_2,n_3), (n_3,n_5), (n_5,t) \rangle$
or  $p' = \langle (s,n_n3), (n_3,n_2), (n_2,n_4), (n_4,t) \rangle$. In this particular case,
both concatenated paths have the same length, i.e., $\len(p) = \len(p') = 11$, and either can
be set as $p_{ssv}(n_2)$. The table on the right-hand side of Figure~\ref{fig:kdspsvp} illustrates
the simple single-via paths for the road network of our running example.

\subsection{The \svpdml Algorithm} \label{sec:heuristic-svpdml}

A straightforward way of employing SSVPs for processing \kdsp queries is to alter the exact \kspdml algorithm
from Section~\ref{sec:kdsp:exact} such that the algorithm iterates over the simple single-via paths
from source $s$ to target $t$ in increasing length order, instead of retrieving the next shortest path.
By the definition of SSVP, the search space will be drastically reduced as at most $\vert N \vert{-}1$ paths
will be examined. In addition, we can still use the pruning techniques proposed in Section~\ref{sec:kdsp:pruning}
to terminate the search and avoid generating candidate path sets that do not contain sufficiently dissimilar
paths. Our first heuristic algorithm, termed \svpdml, follows this approach.
Naturally, as the result of \kdsp queries may not consist exclusively of simple single-via paths,
\svpdml can only provide an approximate solution to the \kdsp problem.

\begin{algorithm}[t]
\small
	\caption{\svpdml}
	\label{algo:kdsp-svp}
	\SetKwFunction{ConstructSSVP}{ConstructSimpleSingleViaPaths}%
	\SetKwFunction{NextSSVP}{NextSSVPByLength}%
    \SetKwComment{com}{\hfill $\triangleright$ }{}%
	\KwIn{Road network $G = (N,E)$, source node $s$, target node $t$, $\#$ of results $k$, similarity threshold $\theta$}%
	\KwOut{Set $\PDSP$ of at most $k$ paths}%
	\BlankLine%
	\textbf{initialize} $\PSSV \gets \emptyset$, $\mathcal{L}_{k{-}1} \gets 0$,$\PDSP \gets \emptyset$,$P \gets \emptyset$\;%
  	$p \gets \NextSSVP(G, s, t)$\com*{Shortest path $\sp$}%
  	\While(\com*[f]{Lemma~\ref{lm:kdsp-term}}){$p \neq \mathit{null}$ \textbf{and} \hspace{6cm} ($\vert \PDSP \vert{<}k$ 		
  	\textbf{or} $\len(p){+}\mathcal{L}_{k{-}1}{\leq} \LEN(\PDSP)$)}
  	{
    	$\PSSV \gets \PSSV \cup \{p\}$\;%
    	\If{$|\PSSV| < k$}
    	{
    		$\mathcal{L}_{k{-}1} \gets \mathcal{L}_{k{-}1} + \ell(p)$\;%
    	}	
    	\ForEach(\com*[f]{Alg. $F$~\cite{knuthTAOCP}}){$P \subseteq \PSSV : \vert P \vert \leq k$ with $p\in P$}{%
      	\If{$\forall$ $p_i,p_j\in P$ with $i\neq j : \olFunc(p_i,p_j) < \theta$}
      	{
      		\If{$|P| > |\PDSP|$ \textbf{or} \hspace{5cm} $|P| = |\PDSP|$ \textbf{and} $\LEN(P) < \LEN(\PDSP)$}
      		{
         		 $\PDSP \gets P$\com*{Update result set}
        	}
      	}
    }
    $p \gets \NextSSVP(G, s, t)$\;
  }
  \Return $\PDSP$\;
\end{algorithm}

Algorithm~\ref{algo:kdsp-svp} illustrates the pseudocode of \svpdml. The algorithm keeps track
of all generated simple single-via paths inside set $\PSSV$ (instead of the shortest paths
inside \PSP, in Algorithm~\ref{algo:kdsp-algo}). \svpdml proceeds exactly as \kspdml but replaces
the call to \NextShortestPath by a call to the \NextSSVP function, which retrieves the
next simple single-via path $p(s{\rightarrow}t)$ in increasing length order. 
In Line~2, the first simple single-via path, i.e., the $p_{sp}(s \rightarrow t)$ shortest path, is retrieved.
From Line~3 to 11, \svpdml iterates over each next simple single-via path $p$ in order, to determine the
subsets of $\PSSV$ that contain $p$ and satisfy the conditions of Problem~\ref{def:kdsp} (Lines~8--9). Upon
identifying such subsets the algorithm updates $\PDSP$ if needed, in Line~10, similar to \kspdml. 
Also like \kspdml, \svpdml terminates after all simple single-via paths have been examined or if
current single simple-via path $p$ satisfies the termination condition of Lemma~\ref{lm:kdsp-term}. 

Next, we elaborate on \NextSSVP. We design the function as an iterator which allows us to compute 
the simple single-via paths in increasing length order.
Function~\ref{algo:nextssvp} illustrates the pseudocode of \NextSSVP. Upon the first call (Lines 1--11),
the function constructs the shortest path trees $T_{s{\rightarrow}N}$ and $T_{N{\rightarrow}t}$ and the shortest
path $\sp$ from $s$ to $t$; note that $T_{N{\rightarrow}t}$ is computed by reversing the direction
of the network edges. In addition, all nodes $n$ of the road network that are not crossed by $\sp$ are
organized inside min-priority queue \Q according to the length of their single-via path,
i.e., $\len(p_{s{\rightarrow}n} \circ p_{n{\rightarrow}t})$.
Note that \Q is static, i.e., its contents are preserved throughout every call of \NextSSVP.
Last, the shortest path $\sp$ is returned as the first simple single-via path in Line~11.
Every followup function call to compute the next simple single-via path in length order, is handled by Lines~12--24.
In Line~13 the node at the top of \Q is extracted and, in Line~14, \NextSSVP checks
whether single-via path $p_{sv}(n) = p_{s{\rightarrow}n} \circ p_{n{\rightarrow}t}$ is simple. If so,
then the path is returned. By definition a simple single-via path of a node $v$ is at least as long
as the single-via path of $v$, i.e., $\ell(p_{ssv}(n)) \geq \ell(p_{sv}(n))$. Thus, as the remaining
nodes in \Q have a longer single-via path than top node $n$, the returned path is indeed the next
simple single-via path in increasing length order. 

If path $p_{s{\rightarrow}n} \circ p_{n{\rightarrow}t}$ is not simple,
then function \NextSSVP needs to construct the simple single-via path of current node $n$.
This construction involves three steps in Lines~17--26. First, the subgraphs $G_1$ and $G_2$ of the
original network are defined by excluding nodes that are crossed by shortest paths $p_{s{\rightarrow}n}$
and $p_{n{\rightarrow}t}$, respectively, in Lines~17--20.
Then, in Lines~21--22, the \NextSSVP computes paths $p_1$ and $p_2$, which
are the candidate paths for being the simple single-via path of $n$.
Note that the shortest paths that are computed on $G_1$ and $G_2$ instead of the original
road network $G$ guarantee that the candidate paths $p_1$ and $p_2$ are simple.
Last, in Lines~23--26, the shortest path among $p_1$ and $p_2$ is selected as the simple single-via
path of current node $n$, and is subsequently inserted to \Q. 
This, step is necessary to guarantee that the function always returns the next simple single-via path in length order. 
Every call to \NextSSVP returns the next simple single-via paths in length order until \Q is empty, i.e., until
all simple single-via paths have been generated.

\setcounter{algocf}{0}
\begin{myfunction}[t]
\small
	\caption{NextSSVPByLength}
	\label{algo:nextssvp}
	\SetKwFunction{RetrieveSP}{RetrieveShortestPath}%
	\SetKwFunction{GetSP}{ComputeShortestPath}%
    \SetKwComment{com}{\hfill $\triangleright$ }{}%
	\KwIn{Road network $G(N,E)$, source node $s$, target node $t$}%
	\KwOut{Next simple single-via path by length}%
	\BlankLine%
	\If {first call} {
		\textbf{initialize} static min-priority queue $\Q \gets \emptyset$\;%
		$T_{s{\rightarrow}N} \gets$ shortest path tree from $s$ to all $n \in N$\;
		$T_{N{\rightarrow}t}\gets$ shortest path tree from all $n \in N$ to $t$\;
		$\sp \gets \GetSP(G,s,t)$\;
		\ForEach {$n \in N - \{s,t\}$} {
			\If{$n$ is not crossed by $\sp$} {
				$p_{s \rightarrow n} \gets \RetrieveSP(T_{s{\rightarrow}N},s,n)$\;%
				$p_{n \rightarrow t} \gets \RetrieveSP(T_{N{\rightarrow}t},n,t)$\;%
    			$\Q.push(\langle n,\len(p_{s \rightarrow n}){+}\len(p_{n \rightarrow t}), p_{s \rightarrow n} \circ p_{n \rightarrow t}  \rangle)$\;%
    		}
		}
		\Return $\sp$\;
	}
	\BlankLine%
	\While{$\Q$ is not empty} {	
		$\langle n,\len(p_{s \rightarrow n} \circ p_{n \rightarrow t}), p_{s \rightarrow n} \circ p_{n \rightarrow t} \rangle \gets \Q.pop()$\;%
		\uIf{$p_{s \rightarrow n} \circ p_{n \rightarrow t}$ is simple} { 
			\Return $p_{s \rightarrow n} \circ p_{n \rightarrow t}$\;	
		}
		\Else {	
			\textbf{let} $N_1 = N \setminus \{$nodes crossed by $p_{s \rightarrow n}\}$ except $s,n$\;
			\textbf{let} $N_2 = N \setminus \{$nodes crossed by $p_{n \rightarrow t}\}$ except $n,t$\;
			\textbf{let} $E_1 = E \setminus p_{s \rightarrow n}$\;
			\textbf{let} $E_2 = E \setminus p_{n \rightarrow t}$\;
        	$p_1 \gets p_{s \rightarrow n} \circ \GetSP(G_1(N_1,E_1),n,t)$\;%
        	$p_2 \gets \GetSP(G_2(N_2,E_2),s,n) \circ  p_{n \rightarrow t}$\;%
			\uIf{$\len(p_1) \leq \len(p_2)$} {
				$\Q.push(\langle n,\len(p_1),p_1 \rangle)$\;%
        	}
        	\Else {
    			$\Q.push(\langle n,\len(p_2),p_2 \rangle)$\;%
    		}
 		}	
	}
  	\Return null\;	
\end{myfunction}

\begin{figure*}[htb]
\centering
\scalebox{1.1} {
\begin{tikzpicture}[
		auto,
		node distance=1cm,
  		thick,
  		inner/.style={
	  		circle,
  			draw,
  			font=\small,
  			minimum size = 16pt,
  			inner sep=0pt
  		},
  		point/.style={
  			circle,
  			draw,
  			font=\small,
  			fill=lightgray,
  			minimum size = 16pt,
  			inner sep=0pt
  		},
  		box/.style={
  			rectangle,
  			draw,
  			minimum width = 4cm,
  			font=\tiny
  		}
	]
	  	\node[point]	(N0)	at (-3,1.75)		{$s$};
  	\node[inner]	(N1)	at (0,3.3)			{$n_1$};
  	\node[inner]	(N2)	at (-1.5,0.2)		{$n_2$};
  	\node[inner]	(N3)	at (-0.75,1.75)		{$n_3$};
	\node[inner]	(N4)	at (2,0.2)			{$n_4$};
	\node[inner]	(N5)	at (1.5,1.75)		{$n_5$};
	\node[point]	(N6)	at (3,1.75)			{$t$};
%
%
	\path[line width=0.5pt, every node/.style={font=\small}]
	   	(N0) edge node[above]	{6} (N1)
		(N0) edge node[right]	{5} (N2)
		(N0) edge node[above]	{3} (N3)
		(N1) edge node[above]	{7} (N6)
		(N2) edge node[left]	{1} (N3)
		(N2) edge node[above]	{5} (N4)
		(N3) edge node[above]	{5} (N4)
		(N3) edge node[above]	{3} (N5)
		(N4) edge node[right]	{1} (N5)
		(N4) edge node[below right]	{2} (N6)
		(N5) edge node[above left]	{2} (N6);
%
%
%
	\draw[->] 			(-3.4,-0.5) 	-- 	(-3,-0.5) node[anchor=west]{\scriptsize $p_{svp}(n_3)$};
	\draw[->,dashed] 	(-0.5,-0.5) -- 	(-0.1,-0.5) node[anchor=west]{\scriptsize $p_{svp}(n_2)$};	
	\draw[->,dotted] 	(2,-0.5) -- 	(2.4,-0.5) node[anchor=west]{\scriptsize $p_{svp}(n_1)$};
\path[->, line width=1.5pt, every node/.style={font=\scriptsize}]
    (N0) edge node[above]       {} (N3)
    (N3) edge node[above]       {} (N5)
    (N5) edge node[above]       {} (N6);
\path[->, dashed, every node/.style={font=\scriptsize}]
        (N0.295) edge node[above]       {} (N2.155)
        (N2.45) edge node[above]       {} (N3.265)
        (N3.340) edge node[above]        {} (N5.200)
        (N5.340) edge node[above]        {} (N6.200);
\path[->, dotted, every node/.style={font=\scriptsize}]
        (N0.40) edge node[above]        {} (N1.195)
        (N1.345) edge node[above]       {} (N6.140);
\node (T)       at (7,1.8)      {
        \resizebox{0.75\columnwidth}{!}{%
        \begin{tabular}{|c|c|c|}
                \hline
                \multicolumn{2}{|c|}{\emph{\PSSV}} & \emph{length}\\
                \hline\hline
                $\cellcolor{gray!20}\sp$      &       $\cellcolor{gray!20}\langle (s,n_3),(n_3,n_5),(n_5,t) \rangle$  &       \cellcolor{gray!20}8   \\
                \hline
                $p_{svp}(n_4)$                                          &   $\langle (s,n_3),(n_3,n_5), (n_5,n_4),(n_4,t) \rangle$              &   9                                   \\
                \hline
                $\cellcolor{gray!20}p_{svp}(n_2)$       &   \cellcolor{gray!20}$\langle (s,n_2),(n_2,n_3),(n_3,n_5),(n_5,t) \rangle$  &   \cellcolor{gray!20}11   \\
                \hline
                $\cellcolor{gray!20}p_{svp}(n_1)$       &   \cellcolor{gray!20}$\langle (s,n_1),(n_1,t) \rangle$      &   \cellcolor{gray!20}13  \\
                \hline
        \end{tabular}
        }
};
\end{tikzpicture}
}
\caption{\svpdml and \svpdplus computing $\kdsp(s,t,3,0.5)$; result paths highlighted in light gray.}
\label{fig:kdspsvp}
\end{figure*}
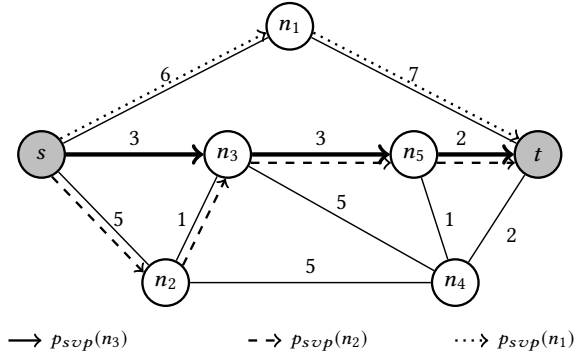

\begin{example}\label{ex:svpdml}
Figure~\ref{fig:kdspsvp} illustrates the execution of \svpdml for the $\kdsp(s, t, 3, 0.5)$ query. On the
right-hand side of the figure we report the set of simple single-via paths $\PSSV$ examined by the
algorithm in increasing length order. Initially, the first simple single-via path, i.e., the shortest
path $\sp$, is retrieved and the result set $\PDSP = \{\sp\}$ is initialized. Next, simple single-via
path $p_{ssv}(n_4)$ is retrieved. Since $\olFunc(\sp,p_{ssv}(n_4)){=}0.545 > 0.5$ subset
$\{\sp,p_{ssv}(n_4)\}$ is never constructed. At this point, $\mathcal{L}_{k{-}1}{=}8{+}9{=}17$ is
computed. Then, the next single-via path in length order $p_{ssv}(n_2)$ is retrieved. The retrieval of
$p_{ssv}(n_2)$ results in no sets of $k$ sufficient dissimilar as the only subset of three paths
$\{\sp, p_{ssv}(n_4), p_{ssv}(n_2)\}$ is never constructed. Among the subsets containing two paths,
set $P_1 = \{\sp, p_{ssv}(n_2)\}$ has the the lowest collective length, i.e., $\LEN(P_1) = 19$. Hence, at
this point \PDSP is set equal to $P_1$. Subsequently, the next simple single-via path in length
order $p_{ssv}(n_1)$ is retrieved. The retrieval of $p_{ssv}(n_1)$ results in the creation of subsets
$P_2{=}\{\sp, p_{ssv}(n_2),p_{ssv}(n_1)\}$ and $P_3{=}\{p_{ssv}(n_4), p_{ssv}(n_2),p_{ssv}(n_1)\}$.
Consequently, as $\LEN(P_2) < \LEN(P_3)$ the result set \PDSP is updated, i.e., $\PDSP = P_2$. Since
there are no more simple single-via paths left to examine, the algorithm terminates.
\end{example}

\subsection{The \svpdplus Algorithm} \label{sec:heuristic-svpdplus}

Despite reducing the search space compared to \kspdml, \svpdml still needs to examine all possible subsets
of at most $k$ sufficiently dissimilar paths. The generation of these subsets is accelerated by the use of
the binomial trees (as discussed in Section~\ref{sec:kdsp:pruning}), but overall we expect this procedure to
dominate the evaluation of the \kdsp query, rendering \kspdml inefficient for real-world networks. In view of this,
we devise a second heuristic algorithm, termed \svpdplus, which adopts a similar idea to
\kspdplus and to our previous work~\cite{chondrogiannis2015,chondrogiannis2017}. The algorithm 
constructs progressively an approximate solution to \kdsp that (1)
always contains the shortest path from source node $s$ to target $t$ and (2) every newly added path to the result
set is both sufficiently dissimilar to the previously recommended paths and as short as possible.
As discussed in Section~\ref{sec:related}, \kspdplus builds on top of the computation of the
$K$-shortest paths while employing  lower bounds to postpone the examination of non-promising paths.  
On the contrary, \svpdplus builds on top of the computation of the simple single-via paths.
Even though \svpdplus cannot employ the same pruning criteria with \kspdplus, the use of simple
single-via paths significantly limits its search space.

Algorithm~\ref{algo:svpdplus} illustrates the pseudocode of \svpdplus.
Similar to \svpdml, \svpdplus invokes \NextSSVP to retrieve the next simple single-via path in length order.
In the beginning, the algorithm retrieves the $\sp(s{\rightarrow}t)$ shortest path,
in Line~2.
From Line~3 to 6, \svpdplus iterates over each next simple single-via path $p$.
In Line~4 the similarity of current path $p$ with all the paths already in \PDSP is checked.
If $p$ is sufficiently dissimilar to all paths in \PDSP, then it is added to \PDSP in
Line~5. Then, the next simple single-via path in length order is retrieved in Line~6.
The algorithm continues its execution until either \PDSP contains $k$ paths, or
there are no more simple single-via paths left to examine.

\setcounter{algocf}{2}
\begin{algorithm}[t]
\small
	\caption{\svpdplus}
	\label{algo:svpdplus}
	\SetKwComment{com}{\hfill $\triangleright$ }{}
	\KwIn{Road network $G(N,E)$, source node $s$, target node $t$, $\#$ of results $k$, similarity threshold $\theta$}%
	\KwOut{Set $\PDSP$ of at most $k$ paths}%
	\BlankLine%
  	$\PDSP \gets \emptyset$\;
	$p \gets \NextSSVP(G,s,t)$\com*{Shortest path $\sp$}%
	\While {\PDSP contains less than $k$ paths \textbf{and} $p$ is not null} {
		\If{$\olFunc(p,p') < \theta$ for all $p' \in \PDSP$}
		{
			add $p$ to $\PDSP$\com*{Update result set}
		}
		$p \gets \NextSSVP(G,s,t)$\;
	}
	\Return $\PDSP$\;
\end{algorithm}

\begin{example}\label{ex:svpdplus}
We illustrate the execution of \svpdplus for the $\kdsp(s, t, 3, 0.5)$ query using again Figure~\ref{fig:kdspsvp}; on the right-hand
side, we report the set of simple single-via paths $\PSSV$ examined by the algorithm in increasing length order.
Initially, the first simple single-via path, i.e., the shortest path $\sp$, is retrieved and directly added to the 
result set \PDSP. Next, simple single-via path $p_{ssv}(n_4)$ is retrieved, but since
$\olFunc(\sp,p_{ssv}(n_4)) = 0.545 > 0.5$, it is not added to \PDSP. The next simple single-via path $p_{ssv}(n_2)$ is then
retrieved. Since, $\olFunc(\sp,p_{ssv}(n_2)) = 0.36 < \theta$, it is added to the result set, i.e., $\PDSP = \{\sp,p_{ssv}(n_2)\}$.
Finally, the simple single-via path $p_{ssv}(n_1)$ is retrieved. Since $p_{ssv}(n_1)$ is sufficiently dissimilar to 
all paths in \PDSP ($\olFunc(\sp,p_{ssv}(n_1)) = 0 < \theta$ and $\olFunc(p_{ssv}(n_2),p_{ssv}(n_1)) = 0 < \theta$), it 
is added to the result set. At this point, $\PDSP = \{\sp,p_{ssv}(n_2),p_{ssv}(n_1)\}$ contains exactly $k$ paths and so,
the algorithm terminates.
\end{example}
\section{Experimental Evaluation} \label{sec:exps}

Our experimental analysis involves four publicly available real-world road
networks, i.e., the road network of the city of Adlershof extracted from
OpenStreetMap\footnote{\url{https://www.openstreetmap.org/}},
and the road networks of the cities of Surat, Tianjin
and Beijing~\cite{karduni2016}. Table~\ref{tab:datasets} shows the size
of the aforementioned road networks. Apart from
our exact algorithm \kspdml and our two heuristic algorithms \svpdml
and \svpdplus, we also include in the evaluation the \kspdplus
algorithm, i.e., the greedy heuristic approach proposed by Liu et
al.~\cite{liu2017}.  All algorithms were implemented in C++, and the
tests run on a machine with 2 Intel Xeon E5-2667 v3 (3.20GHz)
processors and 96GB RAM running Ubuntu Linux.

To assess the performance of our
algorithms, we measure the average response time over 1,000 random
queries (i.e., pairs of nodes), varying the number $k$ of requested
paths and the similarity threshold $\theta$. In each experiment, we
vary one of the two parameters and fix the other to its default value,
i.e., $3$ for $k$ and $0.5$ for $\theta$. Apart from the runtime, we
also report on the quality of the results of each algorithm and the
completeness of each result set.  For the quality, we measure the
average length of the computed paths and compare it to the length of
the shortest path. For the completeness, we measure the percentage of
queries for which an algorithm returns exactly $k$ paths. 

\begin{table}[t]
  \centering
  \small
  \caption{Road networks tested.}
  \label{tab:datasets}
  \begin{tabular}{ |c|c|c|}
    \hline
    \textbf{Road network} 
    & \textbf{\# of nodes}
    & \textbf{\# of edges} \\
    \hline\hline
    Adlershof		&	349 	 	&	979			\\
    Surat			&	2,508 	 	&	7,398		\\
    Tianjin			&	31,002 		&	86,584		\\ 
    Beijing			&	74,383 	 	&	222,778		\\ 
    \hline 
  \end{tabular}
  
\end{table}

\subsection{Runtime}
Figures~\ref{fig:kdsp-exp-k} and~\ref{fig:kdsp-exp-theta} report on
the response time of the algorithms varying the requested number of
paths $k$ and the similarity threshold $\theta$, respectively. First,
we observe that the exact algorithm \kspdml is clearly impractical.
For the road network of Adlershof (Figure~\ref{fig:kdsp-exp-k}a) and
the default values of $k$ and $\theta$, \kspdml may provide reasonable
response time, but it is at least one order of magnitude slower than
its competitors. Furthermore, \kspdml requires more than 3 seconds on
average to process \kdsp queries for $k{>}3$ or $\theta{<}0.5$. The
algorithm provides reasonable response times only for $k{=}2 $ and
$\theta{>}0.5$. Since the response time of \kspdml is prohibitively
high even on a road network as small as Adlershof, it is clear that
the algorithm cannot be used on larger road networks; hence, we
exclude \kspdml from the experiments on larger road networks.

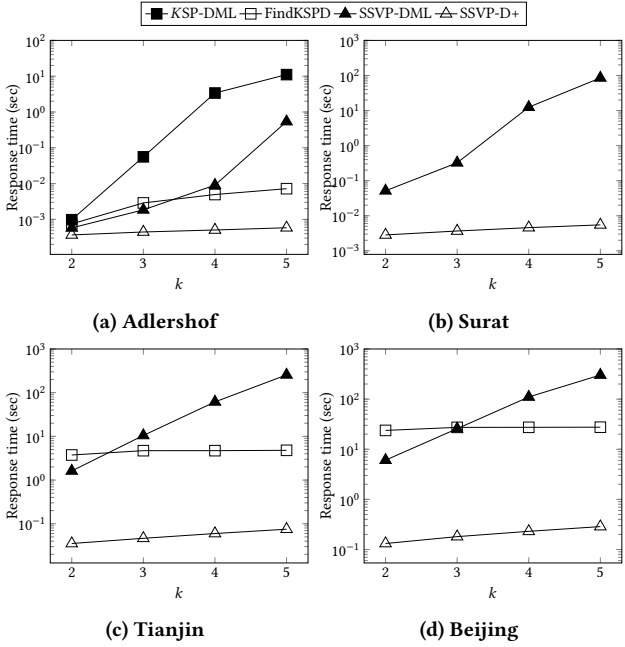
\begin{figure}[t]
	\centering
    \begin{subfigure}[t]{0.23\textwidth}
       	\begin{tikzpicture}[scale=0.5]
       	\tikzstyle{every node}=[font=\LARGE]
           	\begin{axis}[
           		ymode=log,
				xtick={1,2,3,4,5},
       			max space between ticks=30pt,
				try min ticks = 5,
				xlabel={$k$},
    			ylabel={Response time (sec)},
    			ymax=100, ymin=0,
    			y label style={
    				at={(axis description cs:-0.01,0.5)},
    				anchor=north
    			},
    			legend style={
       				at={(1.1,1.18)},
       				anchor=north,
       				legend columns=-1           			
       			},
				mark size=4pt
				]
				\addplot[color=black,mark=square*]   coordinates { 
					(2,0.000981635)		 
					(3,0.0558967) 		 
					(4,3.38668)		 
					(5,11.0779)		 
				};
				\addplot[color=black,mark=square]  coordinates { 
					(2,0.000749625)		 
					(3,0.00288328) 		 
					(4,0.0049503)		 
					(5,0.00720638)		 
				};
				\addplot[color=black,mark=triangle*,mark size=5pt]  coordinates { 
					(2,0.000578637)
					(3,0.00184574)
					(4,0.00905327)
					(5,0.539453)
				};
				\addplot[color=black,mark=triangle,mark size=5pt]  coordinates { 
					(2,0.000368226)
					(3,0.000444847)
					(4,0.000506264)
					(5,0.000583829)	 
				};
				\legend{\kspdml,\kspdplus,\svpdml,\svpdplus}		
			\end{axis}
		\end{tikzpicture}
		\caption{Adlershof}
	\end{subfigure}
    \begin{subfigure}[t]{0.23\textwidth}
		\begin{tikzpicture}[scale=0.5]
        \tikzstyle{every node}=[font=\LARGE]
           	\begin{axis}[
           		ymode=log,
				xtick={1,2,3,4,5},
        		max space between ticks=30pt,
				try min ticks = 5,
				xlabel={$k$},
    			ylabel={Response time (sec)},
    			ymax=1000, ymin=0,
    			y label style={
    				at={(axis description cs:-0.01,0.5)},
    				anchor=north
    			},
				mark size=4pt
				]
				\addplot[color=black,mark=square]  coordinates { 
				};
				\addplot[color=black,mark=triangle*,mark size=5pt]  coordinates { 
					(2,0.051509)
					(3,0.32346)
					(4,12.4591)
					(5,85.1986)	 
				};
				\addplot[color=black,mark=triangle,mark size=5pt]  coordinates { 
					(2,0.00283908)
					(3,0.00367167)
					(4,0.00457822)
					(5,0.00549109)	  
				};
			\end{axis}
		\end{tikzpicture}
		\caption{Surat}
	\end{subfigure}
	\begin{subfigure}[t]{0.23\textwidth}
       	\begin{tikzpicture}[scale=0.5]
       	\tikzstyle{every node}=[font=\LARGE]
        	\begin{axis}[
        		ymode=log,
				xtick={1,2,3,4,5},
       			max space between ticks=30pt,
				try min ticks = 5,
				xlabel={$k$},
   				ylabel={Response time (sec)},
   				ymax=1000, ymin=0,
   				y label style={
   					at={(axis description cs:-0.01,0.5)},
   					anchor=north
   				},
				mark size=4pt
				]
				\addplot[color=black,mark=square]  coordinates { 
					(2,3.74845)
					(3,4.70879)
					(4,4.71122)
					(5,4.79579)	
				};
				\addplot[color=black,mark=triangle*,mark size=5pt]  coordinates { 
					(2,1.61689)
					(3,10.4961)
					(4,61.503)
					(5,255.471)	 
				};
				\addplot[color=black,mark=triangle,mark size=5pt]  coordinates { 
					(2,0.035343)
					(3,0.0464905)
					(4,0.0597991)
					(5,0.0750524)	 
				};
			\end{axis}
		\end{tikzpicture}
		\caption{Tianjin}
	\end{subfigure}
	\begin{subfigure}[t]{0.23\textwidth}
       	\begin{tikzpicture}[scale=0.5]
       	\tikzstyle{every node}=[font=\LARGE]
        	\begin{axis}[
        		ymode=log,
				xtick={1,2,3,4,5},
       			max space between ticks=30pt,
				try min ticks = 5,
				xlabel={$k$},
   				ylabel={Response time (sec)},
   				ymax=1000, ymin=0,
   				y label style={
   					at={(axis description cs:-0.01,0.5)},
   					anchor=north
   				},
				mark size=4pt
				]
				\addplot[color=black,mark=square]  coordinates { 
					(2,23.7525)
					(3,27.3053)
					(4,27.387)
					(5,27.5472)	
				};
				\addplot[color=black,mark=triangle*,mark size=5pt]  coordinates { 
					(2,6.06085)
					(3,25.7291)
					(4,110.3456)
					(5,302.4567)
				};
				\addplot[color=black,mark=triangle,mark size=5pt]  coordinates { 
					(2,0.131794)
					(3,0.18073)
					(4,0.230558)
					(5,0.28661)
				};
			\end{axis}
		\end{tikzpicture}
		\caption{Beijing}
	\end{subfigure}
	\caption{Response time varying requested paths $k$ ($\theta{=}50\%$).}
	\label{fig:kdsp-exp-k}
\end{figure}

Next, we elaborate on the performance of the heuristic algorithms
w.r.t. to the number of requested paths $k$. First, we observe in
Figure~\ref{fig:kdsp-exp-k} that \svpdplus is the fastest algorithm in
all cases, beating all its competitors by almost one order of
magnitude for $k{=}2$ and by approximately two orders of magnitude for
$k{>}2$. For \kspdplus, we observe that the runtime of the algorithm
increases with $k$, but this increase is not very steep.  In fact,
while in all networks {\kspdplus} requires much more time than the other
heuristic algorithms to compute a single path dissimilar to the
shortest path, i.e., for $k{=}2$, computing every followup path is not
as expensive. On the other hand, the increase in the runtime of
\svpdml is much more abrupt.  In particular, for $k{=}2$, \svpdml is
faster than \kspdplus in all road networks, and for $k{=}3$, \svpdml is
slower on in Tianjin. However, for $k{>}3$ \svpdml becomes at least
one order of magnitude slower that \kspdplus.

\begin{figure}[t]
	\centering
        \begin{subfigure}[t]{0.23\textwidth}
        	\begin{tikzpicture}[scale=0.5]
        	\tikzstyle{every node}=[font=\LARGE]
            	\begin{axis}[
					ymode=log,
					xtick={0.1,0.3,0.5,0.7,0.9},
           			max space between ticks=30pt,
					try min ticks = 5,
					xlabel={$\theta$},
    				ylabel={Response time (sec)},
    				ymax=100, ymin=0,
    				y label style={
    					at={(axis description cs:-0.01,0.5)},
    					anchor=north
    				},
    				legend style={
           				at={(1.1,1.18)},
           				anchor=north,
           				legend columns=-1           			
           			},
					mark size=4pt
				]
				\addplot[color=black,mark=square*]   coordinates { 
					(0.1,10.8737)
					(0.3,1.54688)
					(0.5,0.0558967)
					(0.7,0.00447584)
					(0.9,0.000597405)		 
				};
				\addplot[color=black,mark=square]  coordinates { 
					(0.1,0.0462995)		 
					(0.3,0.0104735) 		 
					(0.5,0.00288328)		 
					(0.7,0.000997884)	
					(0.9,0.00055895)		 
				};
				\addplot[color=black,mark=triangle*,mark size=5pt]  coordinates { 
					(0.1,10.5311)
					(0.3,0.00630321)
					(0.5,0.00184574)
					(0.7,0.000914284)
					(0.9,0.00053954)
				};
				\addplot[color=black,mark=triangle,mark size=5pt]  coordinates { 
					(0.1,0.000729103)
					(0.3,0.00054263)
					(0.5,0.000444847)
					(0.7,0.00038561)
					(0.9,0.000355953)		 
				};
				\legend{\kspdml,\kspdplus,\svpdml,\svpdplus}
				\end{axis}
			\end{tikzpicture}
			\caption{Adlershof}
		\end{subfigure}
        \begin{subfigure}[t]{0.23\textwidth}
			\begin{tikzpicture}[scale=0.5]
        	\tikzstyle{every node}=[font=\LARGE]
            	\begin{axis}[
            		ymode=log,
					xtick={0.1,0.3,0.5,0.7,0.9},
           			max space between ticks=30pt,
					try min ticks = 5,
					xlabel={$\theta$},
    				ylabel={Response time (sec)},
    				ymax=100, ymin=0,
    				y label style={
    					at={(axis description cs:-0.01,0.5)},
    					anchor=north
    				},
					mark size=4pt
				]
				\addplot[color=black,mark=square]  coordinates { 
					(0.1,1.26442)
					(0.3,1.19682)
					(0.5,0.830664)
					(0.7,0.361255)
					(0.9,0.0403501)
				};
				\addplot[color=black,mark=triangle*,mark size=5pt] coordinates { 
					(0.1,6.36858)
					(0.3,1.23355)
					(0.5,0.32346)
					(0.7,0.0691109)
					(0.9,0.00999427)
				};
				\addplot[color=black,mark=triangle,mark size=5pt]  coordinates { 
					(0.1,0.00767064)
					(0.3,0.00550661)
					(0.5,0.00367167)
					(0.7,0.00266654)
					(0.9,0.00220849)		 
				};
				\end{axis}
			\end{tikzpicture}
			\caption{Surat}
		\end{subfigure}
		\begin{subfigure}[t]{0.23\textwidth}
        	\begin{tikzpicture}[scale=0.5]
        	\tikzstyle{every node}=[font=\LARGE]
            	\begin{axis}[
            		ymode=log,
					xtick={0.1,0.3,0.5,0.7,0.9},
           			max space between ticks=30pt,
					try min ticks = 5,
					xlabel={$\theta$},
    				ylabel={Response time (sec)},
    				ymax=1000, ymin=0,
    				y label style={
    					at={(axis description cs:-0.01,0.5)},
    					anchor=north
    				},
    				legend style={
           				at={(1.1,1.25)},
           				anchor=north,
           				legend columns=-1           			
           			},
					mark size=4pt
				]
				\addplot[color=black,mark=square]  coordinates { 
					(0.1,4.95176)
					(0.3,5.08099)
					(0.5,4.70879)
					(0.7,4.00487)
					(0.9,1.0356)
				};
				\addplot[color=black,mark=triangle*,mark size=5pt]  coordinates { 
					(0.1,65.1235)
					(0.3,31.4421)
					(0.5,10.4961)
					(0.7,1.96951)
					(0.9,0.133053)	 
				};
				\addplot[color=black,mark=triangle,mark size=5pt]  coordinates { 
					(0.1,0.140354)
					(0.3,0.0692098)
					(0.5,0.0464905)
					(0.7,0.037639)
					(0.9,0.033773)	 
				};
				\end{axis}
			\end{tikzpicture}
			\caption{Tianjin}
		\end{subfigure}
		\begin{subfigure}[t]{0.23\textwidth}
        	\begin{tikzpicture}[scale=0.5]
        	\tikzstyle{every node}=[font=\LARGE]
            	\begin{axis}[
            		ymode=log,
					xtick={0.1,0.3,0.5,0.7,0.9},
           			max space between ticks=30pt,
					try min ticks = 5,
					xlabel={$\theta$},
    				ylabel={Response time (sec)},
    				ymax=1000, ymin=0,
    				y label style={
    					at={(axis description cs:-0.01,0.5)},
    					anchor=north
    				},
					mark size=4pt
				]
				\addplot[color=black,mark=square]  coordinates { 
					(0.1,26.9248)
					(0.3,26.7375)
					(0.5,27.3053)
					(0.7,24.002)
					(0.9,9.71904)	 
				};
				\addplot[color=black,mark=triangle*,mark size=5pt]  coordinates { 
					(0.1,102.7833)
					(0.3,74.4054)
					(0.5,25.7291)
					(0.7,6.47106)
					(0.9,0.360352) 
				};
				\addplot[color=black,mark=triangle,mark size=5pt]  coordinates { 
					(0.1,0.644113)
					(0.3,0.309239)
					(0.5,0.18073)
					(0.7,0.117807)
					(0.9,0.0967498) 
				};
				\end{axis}
			\end{tikzpicture}
			\caption{Beijing}
		\end{subfigure}
	\caption{Response time varying similarity threshold $\theta$ ($k{=}3$).}
	\label{fig:kdsp-exp-theta}
\end{figure}
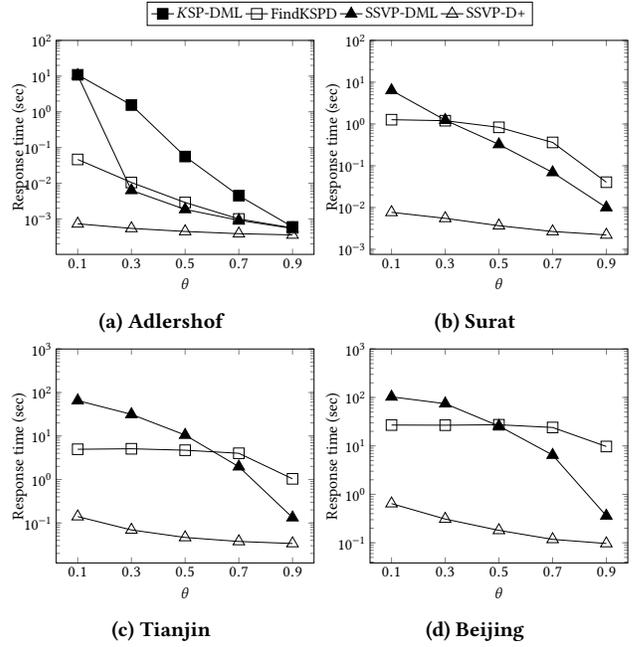

With regard to the similarity threshold $\theta$, in
Figure~\ref{fig:kdsp-exp-theta} we observe that \svpdplus is again the
fastest algorithm in all cases, with improvements of more than two
orders of magnitudes in most cases.
For \kspdplus we observe a similar behavior as before. While the runtime
of \kspdplus increases with a decreasing $\theta$, the increase is not
abrupt. On the contrary, the runtime of \svpdml increases much more with a decreasing
$\theta$. Nevertheless, for \kspdplus and \svpdml, we observe that
\svpdml is faster for $\theta{>}0.5$ with the exception of Tianjin,
while \kspdplus is always faster for $\theta{<}0.5$.

\subsection{Quality and Completeness}

In Figure~\ref{fig:kdsp-exp-quality}, we present our findings on the
quality of the computed results. We consider all queries for which
each algorithm returned exactly $k$ paths and compute the average
length of the returned paths. Then we compare the average length of
each result set to the length of the shortest path. That is, we show
how much longer, on average, the alternative paths with respect to the
shortest path are. Apparently, as shown in Figure~\ref{fig:kdsp-exp-quality}a,
the exact \kspdml algorithm computes the shortest alternative paths on
average. Looking at the heuristic solutions, \kspdplus produces paths
with an average length that is very close to the exact solution.
\svpdml comes next, while \svpdplus recommends the paths with the
highest length on average. However, we observe in Figures~\ref{fig:kdsp-exp-quality}a-c
that the difference between the result set of \kspdplus and the result sets
of \svpdml and \svpdplus is, in all cases, almost insignificant. In particular,
for the road networks of Surat, Tianjin and Beijing, the difference is always less
than $2\%$ for \svpdml and less than $6\%$ for \svpdplus.

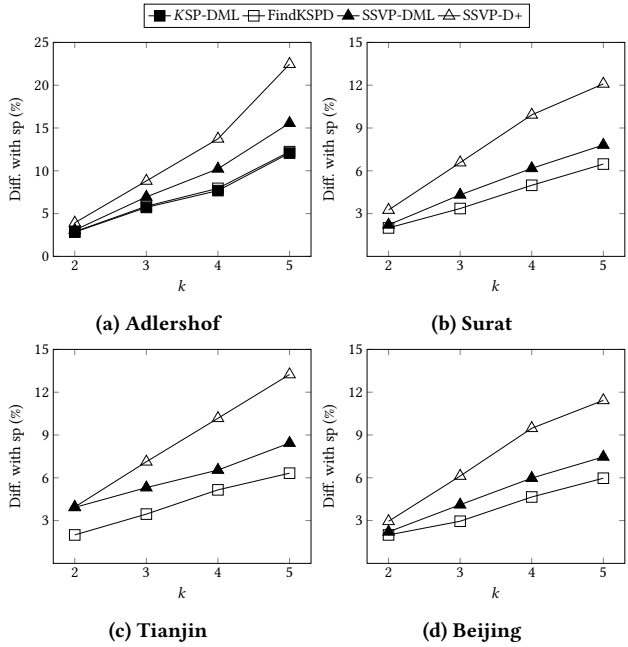
\begin{figure}[t]
  \centering
        \begin{subfigure}[t]{0.23\textwidth}
        	\begin{tikzpicture}[scale=0.5]
        	\tikzstyle{every node}=[font=\LARGE]
            	\begin{axis}[
					xtick={1,2,3,4,5},
           			max space between ticks=30pt,
					try min ticks = 5,
					xlabel={$k$},
    				ylabel={Diff. with sp ($\%$)},
    				ymax=25, ymin=0,
    				y label style={
    					at={(axis description cs:-0.01,0.5)},
    					anchor=north
    				},
    				legend style={
           				at={(1.1,1.18)},
           				anchor=north,
           				legend columns=-1           			
           			},
					mark size=4pt
				]
				\addplot[color=black,mark=square*]   coordinates { 
					(2,2.834)		 
					(3,5.718) 		 
					(4,7.683)		 
					(5,12.064)	 
				};
				\addplot[color=black,mark=square]  coordinates { 
					(2,2.885)		 
					(3,5.854) 		 
					(4,7.947)		 
					(5,12.231)		 
				};
				\addplot[color=black,mark=triangle*,mark size=5pt]  coordinates { 
					(2,3.068)		 
					(3,6.946) 		 
					(4,10.228)		 
					(5,15.564)		 
				};
				\addplot[color=black,mark=triangle,mark size=5pt]  coordinates { 
					(2,3.921)		 
					(3,8.812) 		 
					(4,13.730)		 
					(5,22.459)		 
				};
				\legend{\kspdml,\kspdplus,\svpdml,\svpdplus}	
				\end{axis}
			\end{tikzpicture}
			\caption{Adlershof}
		\end{subfigure}
        \begin{subfigure}[t]{0.23\textwidth}
			\begin{tikzpicture}[scale=0.5]
        	\tikzstyle{every node}=[font=\LARGE]
            	\begin{axis}[
					xtick={1,2,3,4,5},
					ytick={3,6,9,12,15},
           			max space between ticks=30pt,
					try min ticks = 5,
					xlabel={$k$},
    				ylabel={Diff. with sp ($\%$)},
    				ymax=15, ymin=0,
    				y label style={
    					at={(axis description cs:-0.01,0.5)},
    					anchor=north
    				},
    				legend style={
           				at={(1.1,1.25)},
           				anchor=north,
           				legend columns=-1           			
           			},
					mark size=4pt
				]
				\addplot[color=black,mark=square]  coordinates { 
					(2,1.994)		 
					(3,3.354) 		 
					(4,4.983)		 
					(5,6.470)		 
				};
				\addplot[color=black,mark=triangle*,mark size=5pt]  coordinates { 
					(2,2.211)		 
					(3,4.313) 		 
					(4,6.171)		 
					(5,7.808)		 
				};
				\addplot[color=black,mark=triangle,mark size=5pt]  coordinates { 
					(2,3.250)		 
					(3,6.575) 		 
					(4,9.921)		 
					(5,12.092)		 
				};
				\end{axis}
			\end{tikzpicture}
			\caption{Surat}
		\end{subfigure}
        \begin{subfigure}[t]{0.23\textwidth}
			\begin{tikzpicture}[scale=0.5]
        	\tikzstyle{every node}=[font=\LARGE]
            	\begin{axis}[
					xtick={1,2,3,4,5},
					ytick={3,6,9,12,15},
           			max space between ticks=30pt,
					try min ticks = 5,
					xlabel={$k$},
    				ylabel={Diff. with sp ($\%$)},
    				ymax=15, ymin=0,
    				y label style={
    					at={(axis description cs:-0.01,0.5)},
    					anchor=north
    				},
    				legend style={
           				at={(1.1,1.25)},
           				anchor=north,
           				legend columns=-1           			
           			},
					mark size=4pt
				]
				\addplot[color=black,mark=square]  coordinates { 
					(2,1.994)		 
					(3,3.454) 		 
					(4,5.154)		 
					(5,6.323)		 
				};
				\addplot[color=black,mark=triangle*,mark size=5pt]  coordinates { 
					(2,3.923)		 
					(3,5.311) 		 
					(4,6.543)		 
					(5,8.434)		 
				};
				\addplot[color=black,mark=triangle,mark size=5pt]  coordinates { 
					(2,3.950)		 
					(3,7.123) 		 
					(4,10.172)		 
					(5,13.241)		 
				};
				\end{axis}
			\end{tikzpicture}
			\caption{Tianjin}
		\end{subfigure}
        \begin{subfigure}[t]{0.23\textwidth}
			\begin{tikzpicture}[scale=0.5]
        	\tikzstyle{every node}=[font=\LARGE]
            	\begin{axis}[
					xtick={1,2,3,4,5},
					ytick={3,6,9,12,15},
           			max space between ticks=30pt,
					try min ticks = 5,
					xlabel={$k$},
    				ylabel={Diff. with sp ($\%$)},
    				ymax=15, ymin=0,
    				y label style={
    					at={(axis description cs:-0.01,0.5)},
    					anchor=north
    				},
    				legend style={
           				at={(1.1,1.25)},
           				anchor=north,
           				legend columns=-1           			
           			},
					mark size=4pt
				]
				\addplot[color=black,mark=square]  coordinates { 
					(2,1.994)		 
					(3,2.954) 		 
					(4,4.654)		 
					(5,5.970)		 
				};
				\addplot[color=black,mark=triangle*,mark size=5pt]  coordinates { 
					(2,2.211)		 
					(3,4.111) 		 
					(4,5.978)		 
					(5,7.465)		 
				};
				\addplot[color=black,mark=triangle,mark size=5pt]  coordinates { 
					(2,2.950)		 
					(3,6.123) 		 
					(4,9.472)		 
					(5,11.428)		 
				};
				\end{axis}
			\end{tikzpicture}
			\caption{Beijing}
		\end{subfigure}
	\caption{Result quality varying requested paths $k$ ($\theta{=}50\%$).}
  \label{fig:kdsp-exp-quality}
\end{figure}

Finally, we report on the completeness of the result set of each algorithm.
Table~\ref{tab:completeness} reports for each algorithm the percentage of queries for
which exactly $k$ paths were found. Naturally, the exact algorithm \kspdml demonstrates the highest
completeness ratio. However, all heuristic algorithms are very close to the exact solution.
In most cases, all algorithms demonstrate a completeness ratio of more
than $94\%$. The only case where there is a notable difference to the completeness ratio of the algorithms is 
for $\theta{=}0.1$. In this case, we observe that the completeness ratio of \kspdplus
and \svpdplus is significantly lower than the ratio of \kspdml and \svpdml. For the road network of Adlershof 
in particular, the completeness ratio of \kspdplus and \svpdplus is significantly lower than the ratio of
\kspdml and \svpdml. While he difference is much smaller for the rest of the road networks, we still observe
that for a very low value of $\theta$, the completeness ratio of the heuristic algorithms that follow
the greedy approach diminishes.

\begin{table}[htb]
  	\centering
  	\small
  	\caption{Percentage of queries for which a complete result set is found varying requested paths $k$ ($\theta{=}50\%$). }
  	\label{tab:completeness}
	\def\arraystretch{0.9}
\scalebox{0.9}{%
	\begin{tabular}{|c|c|c||c|c|c|c|}
          \hline
          \textbf{Road net.}
          & \textbf{$k$}
          & \textbf{$\theta$}
          & \textbf{\kspdml}
          & \textbf{\kspdplus}
          & \textbf{\svpdml}
          & \textbf{\svpdplus}
          \\\hline\hline
          \multirow{8}{*}{Adlershof}
          & 2	&	0.5	&	100		&	100		&	100		&	99.9	\\
          & 3	&	0.5	&	100		&	100	 	&	100		&	99.7	\\
          & 4	&	0.5	&	100		&	100		&	99.9	&	99.2	\\
          & 5	&	0.5	&	100		&	100		&	99.9	&	98.5	\\
          & 3	&	0.9	&	100		&	100		&	100		&	99.9	\\
          & 3	&	0.7	&	100		&	100		&	100		&	99.9	\\	
          & 3	&	0.3	&	100		&	99.1	&	99.8	&	94.7	\\
          & 3	&	0.1	&	91.8	&	79.7	&	91.7	&	56.1	\\
          \hline
          \multirow{8}{*}{Surat}
          & 2	&	0.5	&	-		&	100		&	100		&	100		\\
          & 3	&	0.5	&	-		&	100		&	100		&	99.9	\\
          & 4	&	0.5	&	-		&	100		&	99.9	&	99.9	\\
          & 5	&	0.5	&	-		&	99.9	&	99.9	&	99.5	\\
          & 3	&	0.9	&	-		&	100		&	100		&	99.9	\\
          & 3	&	0.7	&	-		&	100		&	100		&	99.9	\\	
          & 3	&	0.3	&	-		&	99.9	&	99.9	&	99.4	\\
          & 3	&	0.1	&	-		&	96.4	&	98.1	&	83.0	\\
          \hline
          \multirow{8}{*}{Tianjin}
          & 2	&	0.5	&	-		&	100		&	100		&	100		\\
          & 3	&	0.5	&	-		&	100		&	100		&	100		\\
          & 4	&	0.5	&	-		&	100		&	100		&	99.9		\\
          & 5	&	0.5	&	-		&	100		&	100		&	99.6		\\
          & 3	&	0.9	&	-		&	100		&	100		&	100		\\
          & 3	&	0.7	&	-		&	100		&	100		&	100		\\	
          & 3	&	0.3	&	-		&	99.9	&	99.8	&	99.0		\\
          & 3	&	0.1	&	-		&	99.8	&	96.3	&	88.6		\\
          \hline
          \multirow{8}{*}{Beijing}
          & 2	&	0.5	&	-		&	100		&	100		&	100		\\
          & 3	&	0.5	&	-		&	100		&	100		&	100		\\
          & 4	&	0.5	&	-		&	100		&	100		&	99.9		\\
          & 5	&	0.5	&	-		&	100		&	100		&	99.8		\\
          & 3	&	0.9	&	-		&	100		&	100		&	100		\\
          & 3	&	0.7	&	-		&	100		&	100		&	100		\\	
          & 3	&	0.3	&	-		&	99.9	&	99.8	&	99.0		\\
          & 3	&	0.1	&	-		&	99.8	&	96.1	&	90.8	\\
          \hline
	\end{tabular}	
}
\end{table}

\section{Conclusions}\label{sec:conclusion}

In this paper, we studied the \kdsp problem, which aims at computing $k$ dissimilar
paths while minimizing their collective length. We showed that the problem is
strongly $NP$-hard.  We also presented an exact algorithm, that iterates over all
paths from $s$ to $t$ in length order, along with two pruning criteria to reduce
the number of examined paths. As iterating over all paths from $s$ to $t$ is impractical,
we introduced the much smaller set of simple single-via paths, and we presented two
heuristic algorithms that iterate over this much smaller set to process \kdsp queries.
Our experiments showed that iterating over the set of simple single-via paths in length
order instead of all paths from $s$ to $t$ can lead to scalable solutions with a small
trade-off in the quality of the results.

\bibliographystyle{plain}
\bibliography{related} 

\end{document}